\documentclass[11pt]{article}

\usepackage[utf8]{inputenc}
\usepackage[T1]{fontenc}
\usepackage{lmodern}
\usepackage[margin=1in]{geometry}
\usepackage{microtype}
\usepackage{setspace}
\usepackage{amsmath,amsthm,amssymb,mathtools}
\usepackage{bm}
\usepackage{bbm}
\usepackage{booktabs}
\usepackage{comment}
\usepackage[linesnumbered,ruled,vlined]{algorithm2e}

\SetKwInput{KwInit}{Initialize}

\usepackage{graphicx}

\usepackage{subcaption}
\usepackage{enumitem}
\usepackage[hidelinks]{hyperref}
\usepackage{multirow}
\usepackage[round]{natbib}
\usepackage{float}

\numberwithin{equation}{section}

\theoremstyle{plain}
\newtheorem{theorem}{Theorem}[section]

\newtheorem{lemma}[theorem]{Lemma}

\theoremstyle{definition}
\newtheorem{definition}[theorem]{Definition}

\theoremstyle{remark}
\newtheorem{remark}[theorem]{Remark}

\newcommand{\R}{\mathbb{R}}

\newcommand{\1}{\mathbbm{1}}

\DeclareMathOperator{\tr}{tr}
\DeclareMathOperator{\diag}{diag}
\DeclareMathOperator{\argmin}{arg\,min}

\DeclareMathOperator{\dom}{dom}
\DeclareMathOperator{\prox}{prox}

\newcommand{\norm}[1]{\left\lVert #1 \right\rVert}

\newcommand{\Symq}{\mathbb{S}^q}

\newcommand{\Symqpp}{\mathbb{S}^q_{++}}

\newcommand{\SXX}{\widehat{\Sigma}_{XX}}
\newcommand{\SYY}{\widehat{\Sigma}_{YY}}
\newcommand{\SXY}{\widehat{\Sigma}_{XY}}
\newcommand{\SYX}{\widehat{\Sigma}_{YX}}

\newcommand{\LG}{\mathcal{L}}

\title{\vspace{-1ex}Convex Reparameterization and Self-Concordant Algorithms for Multivariate Regression with Covariance Estimation}

\author{
Hongru Zhao\\[2pt]
\small School of Statistics, University of Minnesota Twin Cities\\
\small \texttt{zhao1118@umn.edu}
\and
Huiqian Feng\\[2pt]
\small Department of Applied Mathematics \& Statistics, Stony Brook University\\
\small \texttt{huiqian.feng@stonybrook.edu}
}

\date{\small\today}

\begin{document}
\maketitle
\doublespacing

\begin{abstract}
Building on the reparameterization of \citet{zhu2020convex} for multivariate linear regression, which yields a jointly convex penalized likelihood in the reparameterized regression coefficient matrix and the precision matrix, we show that the resulting scaled Gaussian loss is standard self-concordant. This places the penalized likelihood problem of jointly estimating the precision matrix and the reparameterized regression coefficient matrix within composite self-concordant optimization \citep{tran2015composite} and yields two algorithms: a proximal gradient method and a damped proximal Newton method. In simulations, we study algorithmic robustness, iterations to convergence, and elapsed time. In a protein expression application, compared with the classical-parameterization formulation, the proposed convex formulation gives similar mean squared prediction error and can be substantially faster when the fitted precision matrix is dense.
\end{abstract}

\section{Introduction}\label{sec:intro}

In many applications, such as gene and protein regulatory networks \citep{friedman2008glasso, Yin2011, Cai2013} and econometric systems \citep{demirer2018estimating, barigozzi2019nets}, one observes multiple responses whose error dependence structure is scientifically meaningful. Multivariate linear regression simultaneously models these response variables in terms of one or more predictors. In such settings, penalized Gaussian likelihood provides a way to estimate regression effects and the error precision matrix jointly, often under structural constraints such as sparsity.

Concretely, let $x_i^{\mathrm{raw}}\in\R^{p}$ and $y_i^{\mathrm{raw}}\in\R^{q}$ denote the raw predictors and responses. Consider the multivariate linear regression model with Gaussian errors \(y_i^{\mathrm{raw}}\mid x_i^{\mathrm{raw}}\sim\mathcal{N}_q \bigl(\alpha + (\beta^{\mathrm{raw}})^\top x_i^{\mathrm{raw}},\Sigma\bigr),\) with intercept $\alpha\in\R^q$, slope matrix on the raw scale $\beta^{\mathrm{raw}}\in\R^{p\times q}$, and precision matrix $\Omega=\Sigma^{-1}$. In practice, estimation is carried out on centered (and, if desired, standardized) predictors and centered responses; on this working scale we write $x_i$ and $y_i$ and fit \( y_i\mid x_i\sim\mathcal{N}_q\bigl(\beta^\top x_i,\Sigma\bigr),\) 
where $\beta\in\R^{p\times q}$ is the slope matrix on the preprocessed scale.

Within the classical parameterization $(\beta,\Omega)$, this model and related estimators have been studied extensively \citep[see, e.g.,][]{rothman2010,Cai2013,Yin2011,Rai2012,Lee2012,Chen2017}. Several estimation strategies are available. One common two-stage strategy estimates $\beta$ first and then estimates a sparse precision matrix from the residuals, for example by the graphical lasso or constrained $\ell_1$ minimization for inverse matrix estimation (CLIME) \citep{friedman2008glasso,cai2011clime}. Conditional-likelihood and conditional Gaussian graphical model methods instead model each response given the predictors and the remaining responses \citep{Yin2011,Lee2012,wang2015joint,wytock2013sparse,sohn2012joint}. Related extensions include multiresponse models that leverage output or task structure \citep{Rai2012,Chen2017}, multivariate linear mixed effects models \citep{navon2020capturing}, and robust multivariate lasso regression with covariance estimation \citep{chang2023robust}. Other approaches impose more specific response-dependence structure, such as latent factors \citep{zhou2017sparse,chan2025drfarm} or dense equicorrelation/compound-symmetry covariance structure for strongly associated responses \citep{ham2025sparse}. A central joint likelihood benchmark is the multivariate regression with covariance estimation (MRCE) method of \citet{rothman2010}, which estimates $(\beta,\Omega)$ jointly with $\ell_1$ penalties on the regression coefficients and the off-diagonal precision entries. In the classical parameterization $(\beta,\Omega)$, however, the joint objective is generally nonconvex, so local optimization methods can converge to different stationary points under different initial values.

To address this nonconvexity, following \citet{zhu2020convex} we reparameterize the model through the Gaussian natural parameter $\gamma=\beta\Omega$ and optimize over $(\gamma,\Omega)$, for which the negative log-likelihood is jointly convex. This transformation also has a natural interpretation from the viewpoint of conditional Gaussian graphical models: for each response coordinate, the column $\gamma_{\cdot j}$ describes predictor--response links after adjusting for the remaining responses, whereas $\beta=\gamma\Omega^{-1}$ governs the conditional mean and can combine direct predictor effects with effects transmitted through the response dependencies. As a result, sparsity in $\gamma$ targets conditional predictor--response associations rather than mean regression coefficients alone; we return to this point in Subsection~\ref{subsec:reparam_loss}. We therefore refer to $\gamma$ as the predictor--response link parameter throughout the paper.

On the preprocessed training data this leads to the following composite convex optimization problem:
\begin{equation}\label{equ:opt_main_problem}
(\hat\gamma,\hat\Omega)\in\arg\min_{\gamma\in\mathbb{R}^{p\times q},\ \Omega\in\mathbb{S}_{++}^q}
\;\Bigl\{ {\LG}(\gamma,\Omega)+\lambda_\gamma \|\gamma\|_1 + \lambda_\Omega \|\Omega\|_{1,\mathrm{off}} \Bigr\},
\end{equation}
where $\mathbb{S}_{++}^q$ denotes the cone of $q\times q$ symmetric positive-definite matrices, and $\LG(\gamma,\Omega)$ is the scaled reparameterized Gaussian negative log-likelihood, written explicitly in Section~\ref{sec:self_concordant}. For any matrix $A=(a_{ij})$, we write
\( \|A\|_{\max}:=\max_{i,j}|a_{ij}|,\ \|A\|_{\max,\mathrm{off}}:=\max_{i\ne j}|a_{ij}|,\ \|A\|_{1,\mathrm{off}}:=\sum_{i\ne j}|a_{ij}|.\)
Once $(\hat\gamma,\hat\Omega)$ is obtained, Lemma~\ref{lem:centering_intercept} recovers the slope and intercept on the raw scale.
Although the reparameterization yields joint convexity, the geometry of the smooth loss is not governed by a fixed Euclidean geometry: because the likelihood involves $\Omega^{-1}$, its gradient is not globally Lipschitz in a fixed Euclidean norm. As a result, classical fixed-step analyses and step-size choices based on $L$-smoothness are not directly applicable.

To handle this geometry, we use the second and third derivatives of the loss. We prove that the scaled loss $\LG$ is standard self-concordant. This allows us to use composite self-concordant theory \citep{tran2015composite} for a proximal gradient method with a local curvature correction and a damped proximal Newton method with the usual Newton-decrement step length. We refer to the resulting estimator as \textsc{SCA-MRCE} (Self-Concordant Algorithms for Multivariate Regression with Covariance Estimation). In the numerical studies, the proximal gradient variants are more stable and faster along a penalty path, while proximal Newton is more attractive when a more accurate solution is needed. We also study aggressive variants that reduce inner computational work and elapsed time, but they do not have the same theoretical convergence guarantees.

\subsection{A one-dimensional motivating example}\label{sec:motivating_example}
To illustrate the nonconvex optimization landscape in the classical mean and precision parameterization, consider a simple one-dimensional Gaussian setting. Let $\bar{x}\in\R$ and $S\ge 0$ denote the sample mean and empirical variance. For a mean parameter $\mu\in\R$ and a precision parameter $\Omega>0$, consider the penalized negative log-likelihood:
\begin{equation}
-\log \Omega\;+\; S\,\Omega \;+\; \Omega\,(\mu-\bar{x})^{2} \;+\; \lambda_{1}\,|\mu| \;+\; \lambda_{2}\,\Omega, \qquad (\mu,\Omega)\in\R\times(0,\infty).
\end{equation}
For suitable choices of $(\bar{x},S,\lambda_{1},\lambda_{2})$, this objective exhibits multiple local minima, as illustrated in Figure~\ref{fig:pen_gauss_1d}. This phenomenon persists in higher dimensions when working in $(\beta,\Omega)$, whereas the natural parameter reparameterization $(\gamma,\Omega)$ leads to a globally convex landscape.

\begin{figure}[htbp]
\centering
\begin{subfigure}{0.455\textwidth}
  \includegraphics[width=\linewidth]{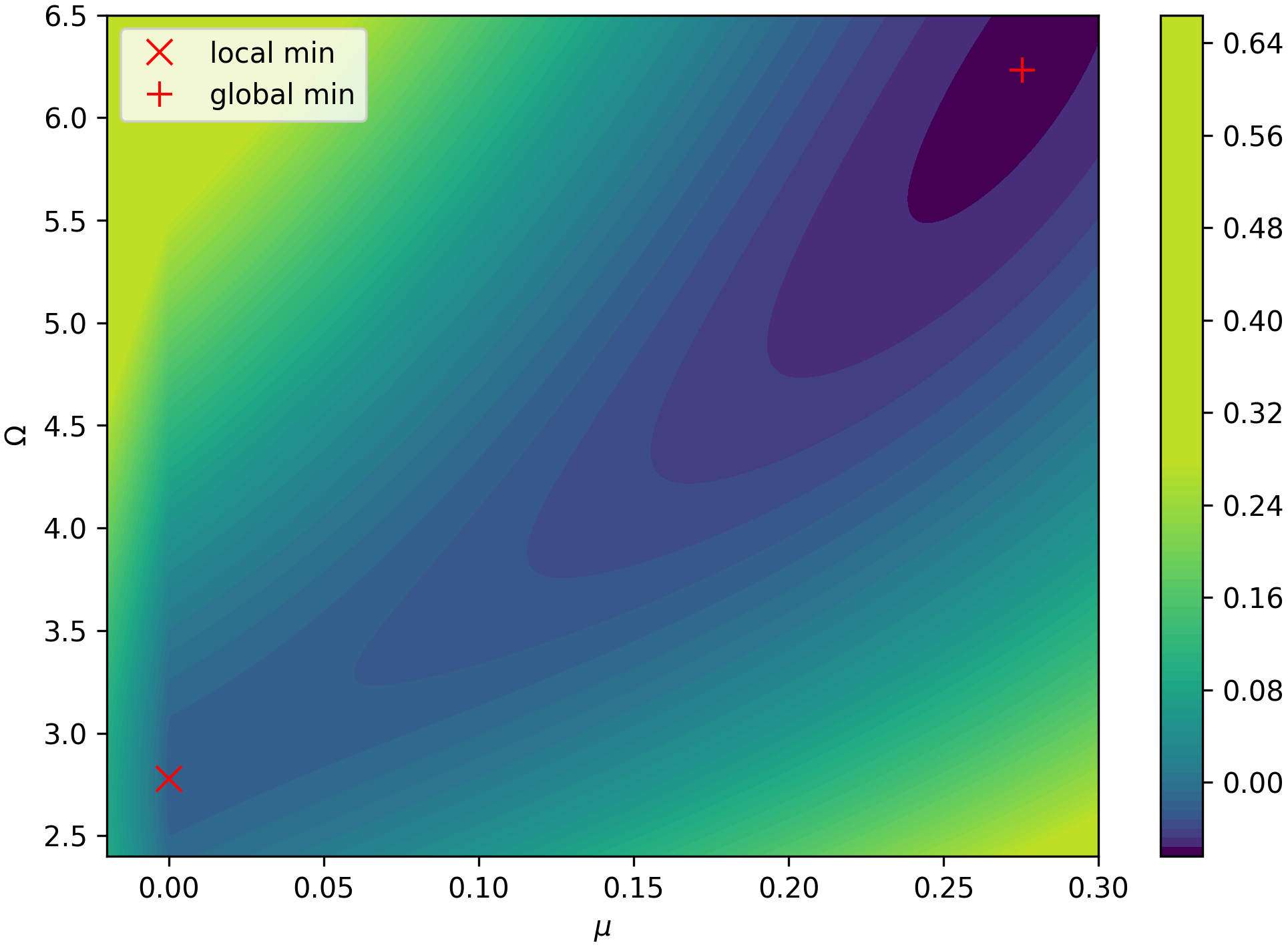}
  \caption{{Global landscape of the objective.}}
\end{subfigure}\hfill
\begin{subfigure}{0.48\textwidth}
  \includegraphics[width=\linewidth]{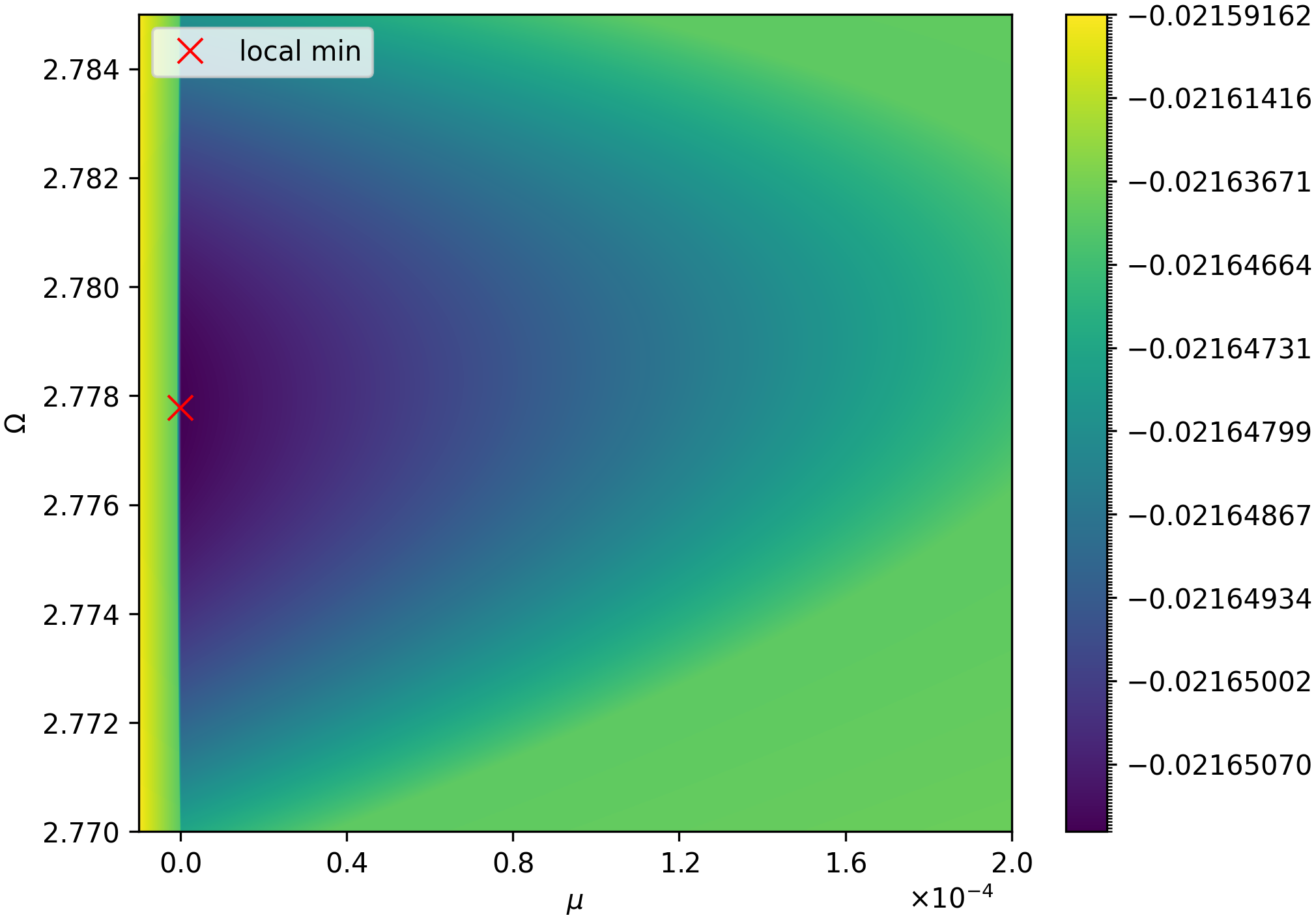}
  \caption{{Zoom near a local minimum.}}
\end{subfigure}
\caption{{Penalized univariate Gaussian example with $\bar{x}=0.5$, $S=0.01$, $\lambda_1=2.8$, and $\lambda_2=0.10$. The classical $(\mu,\Omega)$ parameterization yields multiple local minima.}}
\label{fig:pen_gauss_1d}
\end{figure}

\subsection{Contributions and outline}
Our main contributions are as follows:
\begin{itemize}[leftmargin=*]
\item \textbf{Self-concordant geometry:} We show that the scaled convex Gaussian loss $\LG$ in the $(\gamma,\Omega)$ parameterization is standard self-concordant. This places the likelihood within the composite self-concordant theory \citep{tran2015composite} and provides the local norm control used in the convergence analysis.
\item \textbf{Globally convergent iterative algorithms:} We develop self-concordant first- and second-order methods for composite penalized estimation. The proximal gradient scheme uses a local curvature correction from composite self-concordant theory, while the proximal Newton scheme uses the Newton decrement to obtain a damped step with global convergence and local quadratic convergence.
\item \textbf{Empirical evaluation:} The experiments compare predictive performance, algorithmic robustness, outer iteration counts, and path computation time across legacy and aggressive solver variants. The aggressive variants truncate expensive inner updates and reduce computation time, but they do not have the same convergence guarantees. The proximal gradient methods are the most stable and fastest family for path fitting, while proximal Newton uses many fewer outer iterations and remains useful when a more accurate solve is worth the additional inner work.
\end{itemize}

\paragraph{Organization.}
The rest of the paper is organized as follows. Section~\ref{sec:self_concordant} formulates the working scale and raw-scale models, records the derivatives and local geometry, and establishes self-concordance. Section~\ref{sec:algorithms} presents the proximal gradient and proximal Newton algorithms, the alternating direction method of multipliers (ADMM) subproblem solver, the penalty path construction, and the raw-scale recovery and evaluation metrics used later in the paper. Section~\ref{sec:sim} reports the simulation study. Section~\ref{sec:realdata} presents the real data analysis on the mice protein expression data. Section~\ref{sec:conclusion} concludes, and the appendices collect proofs and implementation details.

\section{Model Formulation and Local Geometry}
\label{sec:self_concordant}

Section~\ref{sec:intro} introduced the multivariate linear regression model with Gaussian errors and the penalized estimation problem. This section first records the predictor preprocessing and intercept recovery linking the raw and working scales, then explains the statistical interpretation of the reparameterized loss, and finally develops the self-concordant geometry used by the first- and second-order algorithms in Section~\ref{sec:algorithms}.

\subsection{Predictor preprocessing (centering/standardization) and intercept recovery}
\label{subsec:centering_intercept}

We distinguish the raw data scale from the working scale used for estimation. On the raw scale, the conditional mean model is
\(\mathbb{E}\!\left(y_i^{\mathrm{raw}}\mid x_i^{\mathrm{raw}}\right)
=
\alpha+(\beta^{\mathrm{raw}})^\top x_i^{\mathrm{raw}}.\)
Let \(X^{\mathrm{raw}}=[x_1^{\mathrm{raw}},\cdots, x_n^{\mathrm{raw}}]^{\top}\in\R^{n\times p},\ Y^{\mathrm{raw}}=[ y_1^{\mathrm{raw}},\cdots, y_n^{\mathrm{raw}}]^{\top}\in\R^{n\times q},\) and let \( \bar x = \frac{1}{n}\1_n^\top X^{\mathrm{raw}}\in\R^{1\times p},\ \bar y = \frac{1}{n}\1_n^\top Y^{\mathrm{raw}}\in\R^{1\times q}.\)
Here and below, $\1_m$ denotes the length-$m$ vector of ones. For centering only, we take $D_x=I_p$, where $I_p$ is the $p\times p$ identity matrix. For predictor standardization, we take \(D_x=\diag(s_{x,1},\dots,s_{x,p}),\)
where $s_{x,j}$ is the sample standard deviation in the training data of predictor $j$. We then define \(X := \bigl(X^{\mathrm{raw}}-\1_n\bar x\bigr)D_x^{-1}\) and \(Y := Y^{\mathrm{raw}}-\1_n\bar y.\)
We continue to write $x_i$ and $y_i$ for the corresponding observations on the working scale. Thus the training matrix after predictor preprocessing $X$ is obtained by centering and, if desired, standardizing the raw predictors.

On this working scale we fit the model without an intercept \( y_i\mid x_i\sim\mathcal{N}_q\bigl(\beta^\top x_i,\Sigma\bigr),\)
where $\beta\in\R^{p\times q}$ is the slope matrix on the preprocessed scale. The corresponding slope on the raw scale is
\(\beta^{\mathrm{raw}}=D_x^{-1}\beta.\)

Lemma~\ref{lem:centering_intercept} formalizes the equivalence between fitting on the raw scale with an unpenalized intercept and fitting on the working scale without an explicit intercept.

\begin{lemma}[Intercept recovery]\label{lem:centering_intercept}
With $D_x$, $X$, and $Y$ defined above, consider the Gaussian loss on the raw scale
\begin{align*}
\LG_{\mathrm{raw}}(\alpha,\beta^{\mathrm{raw}},\Omega)
&:=
-\log\det(\Omega) \\
&\quad
+\frac{1}{n}\sum_{i=1}^n
\bigl(y_i^{\mathrm{raw}}-\alpha-(\beta^{\mathrm{raw}})^\top x_i^{\mathrm{raw}}\bigr)^\top
\Omega
\bigl(y_i^{\mathrm{raw}}-\alpha-(\beta^{\mathrm{raw}})^\top x_i^{\mathrm{raw}}\bigr),
\quad
\Omega\succ0.
\end{align*}
For every fixed $(\beta^{\mathrm{raw}},\Omega)$, the unique minimizer in $\alpha$ is \(\alpha^\star(\beta^{\mathrm{raw}})=(\bar y-\bar x\,\beta^{\mathrm{raw}})^\top.\)
If $\beta=D_x\beta^{\mathrm{raw}}$, then
\(\LG_{\mathrm{raw}}\bigl(\alpha^\star(\beta^{\mathrm{raw}}),\beta^{\mathrm{raw}},\Omega\bigr)=\LG_{\mathrm{prep}}(\beta,\Omega),\)
where \( 
\LG_{\mathrm{prep}}(\beta,\Omega)
:=
-\log\det(\Omega)
+\frac{1}{n}\sum_{i=1}^n
\bigl(y_i-\beta^\top x_i\bigr)^\top
\Omega
\bigl(y_i-\beta^\top x_i\bigr).\)
Therefore the problem on the raw scale with an unpenalized intercept is equivalent, after this preprocessing, to the corresponding problem on $(X,Y)$ with no explicit intercept. 
\end{lemma}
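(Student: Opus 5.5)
The plan is to treat the loss on the raw scale, for fixed $(\beta^{\mathrm{raw}},\Omega)$, as a strictly convex quadratic in $\alpha$. Write the residuals as $r_i(\alpha):=y_i^{\mathrm{raw}}-\alpha-(\beta^{\mathrm{raw}})^\top x_i^{\mathrm{raw}}$. The $\alpha$-dependent part of $\LG_{\mathrm{raw}}$ is $\frac1n\sum_i r_i(\alpha)^\top\Omega\, r_i(\alpha)$. Its gradient in $\alpha$ is $-\frac{2}{n}\sum_i \Omega\, r_i(\alpha)$, and its Hessian is $2\Omega\succ0$. Hence the function is strictly convex, and the minimizer is unique and characterized by the first-order condition $\Omega\sum_i r_i(\alpha)=0$. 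Because $\Omega$ is invertible, this condition reduces to $\sum_i r_i(\alpha)=0$, that is, $n\alpha=\sum_i y_i^{\mathrm{raw}}-(\beta^{\mathrm{raw}})^\top\sum_i x_i^{\mathrm{raw}}$. Dividing by $n$ and converting the row vectors $\bar x,\bar y$ to columns gives $\alpha^\star=\bar y^\top-(\beta^{\mathrm{raw}})^\top\bar x^\top=(\bar y-\bar x\beta^{\mathrm{raw}})^\top$.

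Next I substitute $\alpha^\star$ into the residuals and get $r_i(\alpha^\star)=(y_i^{\mathrm{raw}}-\bar y^\top)-(\beta^{\mathrm{raw}})^\top(x_i^{\mathrm{raw}}-\bar x^\top)$. By the definition of $Y$, the first term is exactly the working response $y_i$, the $i$th row of $Y$ written as a column. For the predictors, the $i$th row of $X$ is $(x_i^{\mathrm{raw}\top}-\bar x)D_x^{-1}$, so $x_i=D_x^{-1}(x_i^{\mathrm{raw}}-\bar x^\top)$, using that $D_x$ is diagonal and hence symmetric. It follows that $x_i^{\mathrm{raw}}-\bar x^\top=D_x x_i$. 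Combining this with $\beta=D_x\beta^{\mathrm{raw}}$ gives $(\beta^{\mathrm{raw}})^\top D_x x_i=(D_x\beta^{\mathrm{raw}})^\top x_i=\beta^\top x_i$, again by symmetry of $D_x$. Therefore $r_i(\alpha^\star)=y_i-\beta^\top x_i$. The $-\log\det\Omega$ term is unchanged, so the claimed identity $\LG_{\mathrm{raw}}(\alpha^\star,\beta^{\mathrm{raw}},\Omega)=\LG_{\mathrm{prep}}(\beta,\Omega)$ follows.

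For the equivalence statement, I profile out $\alpha$. Since the intercept is unpenalized, minimizing over $(\alpha,\beta^{\mathrm{raw}},\Omega)$ equals minimizing $\LG_{\mathrm{raw}}(\alpha^\star(\beta^{\mathrm{raw}}),\beta^{\mathrm{raw}},\Omega)$ plus the penalties over $(\beta^{\mathrm{raw}},\Omega)$. The map $\beta^{\mathrm{raw}}\mapsto D_x\beta^{\mathrm{raw}}$ is a bijection because $D_x$ is invertible: the sample standard deviations are positive, assuming no constant predictor. So minimizers correspond one-to-one, and the intercept is recovered afterwards through $\alpha^\star$. Any penalty is understood to be expressed on the working-scale parameters.

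No step is a real obstacle. The only care needed is bookkeeping of the row-versus-column conventions for $\bar x,\bar y$ and the transpose of $D_x$, which is harmless because $D_x$ is diagonal.
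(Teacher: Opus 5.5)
Your proposal is correct and follows the paper's proof essentially step for step: strict convexity in $\alpha$ from $\Omega\succ0$, the first-order condition yielding $\alpha^\star=(\bar y-\bar x\,\beta^{\mathrm{raw}})^\top$, and substitution showing that the residuals become $y_i-\beta^\top x_i$. Your remark that $D_x$ must be invertible (no constant predictor), so that $\beta^{\mathrm{raw}}\mapsto D_x\beta^{\mathrm{raw}}$ is a bijection, makes explicit a point the paper leaves implicit in its equivalence claim.
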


Accordingly, throughout the remainder of the paper, $X$ denotes the preprocessed training design matrix, $Y$ denotes the centered training responses, and the optimization variables contain no explicit intercept. Raw-scale coefficients are recovered after fitting by Lemma~\ref{lem:centering_intercept}. An elementary proof is given in Appendix~\ref{app:proof_centering}.

\subsection{Reparameterized Gaussian loss}\label{subsec:reparam_loss}

For the training matrix after predictor preprocessing $X$ and the centered response matrix $Y$, let the empirical second moment matrices be \(\SXX := \frac{1}{n} X^\top X,\ \SYY := \frac{1}{n} Y^\top Y,\ \SYX := \frac{1}{n} Y^\top X,\ \SXY := \SYX^\top = \frac{1}{n} X^\top Y.\)
Recall the natural parameter reparameterization \(\gamma = \beta\Omega\) for the slope matrix \(\beta\) associated with the preprocessed predictor matrix \(X\). Writing
\(\xi := (\gamma,\Omega)\), the reparameterized averaged Gaussian negative log-likelihood is
\begin{equation*}
  \widetilde{\LG}(\xi)
  =
  \frac12\Bigl\{
    -\log\det(\Omega)
    + \tr\bigl(\Omega^{-1}\gamma^\top \SXX \gamma\bigr)
    - 2\tr\bigl(\SXY^\top \gamma\bigr)
    + \tr\bigl(\Omega \SYY\bigr)
  \Bigr\},
  \
  \dom(\widetilde{\LG})=\R^{p\times q}\times\Symqpp.
\end{equation*}
\noindent
For algorithmic development we will work with the scaled loss
\begin{equation*}
\LG(\xi)
:= 2\,\widetilde{\LG}(\xi)
=
-\log\det(\Omega)
+ \tr\bigl(\Omega^{-1}\gamma^\top \SXX \gamma\bigr)
- 2\tr\bigl(\SXY^\top \gamma\bigr)
+ \tr\bigl(\Omega \SYY\bigr),
\
\dom(\LG)=\dom(\widetilde{\LG}).
\end{equation*}
This scaling is convenient because it makes $\LG$ a standard self-concordant function
(Lemma~\ref{lem:self_concordant_Gauss}). It is trivial to verify that $\LG_{\mathrm{prep}}$ in Lemma~\ref{lem:centering_intercept} is identical to $\LG$.

Beyond yielding joint convexity, the reparameterization $\gamma=\beta\Omega$ also has a natural statistical interpretation. For each response coordinate $j$, standard Gaussian conditioning gives
\begin{equation}\label{eq:nodewise_conditional_gamma}
Y_j \mid X=x,\;Y_{-j}=y_{-j}
\sim
\mathcal N\!\left(
\frac{x^\top\gamma_{\cdot j}-\Omega_{j,-j}y_{-j}}{\Omega_{jj}},
\Omega_{jj}^{-1}
\right),
\end{equation}
where $\gamma_{\cdot j}$ is the $j$-th column of $\gamma$ and $\Omega_{j,-j}$ is the $j$-th row of $\Omega$ excluding its diagonal entry $\Omega_{jj}$. Hence $\gamma_{\cdot j}/\Omega_{jj}$ is the coefficient of $X$ in the conditional regression of $Y_j$ on $(X,Y_{-j})$; it measures predictor--response association after adjusting for the remaining responses. Since $\Omega_{jj}>0$, this conditional coefficient has the same support as $\gamma_{\cdot j}$, so an $\ell_1$ penalty on $\gamma$ targets sparsity in these conditional predictor--response links. By contrast, $\beta=\gamma\Omega^{-1}$ determines the conditional mean $\mathbb E(Y\mid X=x)=\beta^\top x$, and \(\beta_{\cdot j}=\sum_{k=1}^q \gamma_{\cdot k}(\Omega^{-1})_{kj},\)
so the mean coefficient for response $j$ can combine conditional links through the response network; in particular, $\beta_{\cdot j}$ may be nonzero even when $\gamma_{\cdot j}=0$. This interpretation is statistical rather than causal and is standard in the conditional Gaussian graphical model (CGGM) literature \citep{zhu2020convex,chiquet2017structured,sohn2012joint}. Meanwhile, penalizing the off-diagonal entries of $\Omega$ targets sparsity in the response--response conditional dependence network \citep{wang2015joint,wytock2013sparse,gan2022bayesian}.

We next record the gradient, Hessian, and local norms needed for the optimization analysis of the preprocessed objective.

\subsection{Derivatives, Hessian, and local norms}
We work on the product space (ambient space) \(\Xi := \R^{p\times q}\times \Symq\), with generic increment \(d\xi=(d\gamma,d\Omega)\in\Xi\), where $\Symq$ denotes the space of symmetric $q\times q$ matrices. This space is equipped with the Frobenius inner product
\(
  \langle d\xi_1,d\xi_2\rangle
  :=
  \tr(d\gamma_1^\top d\gamma_2)+\tr(d\Omega_1^\top d\Omega_2).
\)
When convenient, we identify \(\xi=(\gamma,\Omega)\) with the stacked matrix
\(
\xi= (\gamma^\top \ \Omega^\top)^\top
\in\R^{(p+q)\times q}
\)
(and similarly for increments \(d\xi\)).

The gradients of \(\LG\) with respect to \(\gamma\) and \(\Omega\) are $\nabla_{\gamma}\,\LG(\xi)
= -\,2\SXY + 2\SXX\,\gamma\,\Omega^{-1},
\
\nabla_{\Omega}\,\LG(\xi)
= \SYY - \Omega^{-1} - \Omega^{-1}\gamma^\top \SXX\gamma\,\Omega^{-1}.$ We write \(\nabla \LG(\xi) := \bigl(\nabla_\gamma\LG(\xi),\nabla_\Omega\LG(\xi)\bigr)\in\Xi\).

For any increment \(d\xi=(d\gamma,d\Omega)\in\Xi\), the Hessian quadratic form is
\(\nabla^2\LG [\xi](d\xi,d\xi)=\tr\bigl( d\xi^\top H(\xi)\,d\xi\,\Omega^{-1}\bigr).\)
The block matrix \(H(\xi)\in\R^{(p+q)\times(p+q)}\) is
\[
H(\xi)=
\begin{pmatrix}
2\SXX & -\,2\SXX\,\gamma\,\Omega^{-1}\\[4pt]
-\bigl(2\SXX\,\gamma\,\Omega^{-1}\bigr)^\top &
\Omega^{-1}+2\,\Omega^{-1}\,\gamma^\top\,\SXX\,\gamma\,\Omega^{-1}
\end{pmatrix}.
\]
Equivalently, \(\nabla^2\LG[\xi]\) can be viewed as a self-adjoint linear map on \(\Xi\)
through
\(
  \nabla^2\LG[\xi](d\xi)
  =
  \bigl(H(\xi)d\xi\,\Omega^{-1}\bigr)_{\Xi},
\)
with the quadratic form given by \(\langle d\xi,\nabla^2\LG[\xi](d\xi)\rangle\).

The Hessian induces the local (semi-)norm at \(\xi\),
\[
\norm{d\xi}_{\xi}
:=
\sqrt{\nabla^2\LG[\xi](d\xi,d\xi)}
=
\sqrt{\langle d\xi,\;H(\xi)d\xi\,\Omega^{-1}\rangle}.
\]
Because \(\SXX\) may be singular in high dimensions, \(\|\cdot\|_{\xi}\) is in general only a seminorm.

\subsection{Self-concordance}
Self-concordance replaces global Lipschitz gradient assumptions with a local third derivative
inequality that is well suited to objectives involving \(\log\det(\Omega)\) and \(\Omega^{-1}\).

\begin{definition}[Self-concordant function]\label{def:self_concordant_1d}
Let $V$ be a finite-dimensional real vector space and let $C \subseteq V$ be
open and convex. A three times continuously differentiable function
$f \colon C \to \R$ is called {$M$-self-concordant} for some constant
$M>0$ if, for every $x \in C$ and every direction $h \in V$, the
one-dimensional restriction
\(\phi(t) := f(x + t h)\),
defined for all $t$ such that $x + t h \in C$, satisfies
\(
  \bigl| \phi'''(0) \bigr|
  \;\le\;
  M \, \bigl( \phi''(0) \bigr)^{3/2}.
\)
The special case $M=2$ is often called a standard self-concordant function.
\end{definition}

\begin{lemma}[Self-concordance of the reparameterized Gaussian loss]\label{lem:self_concordant_Gauss}
The scaled loss $\LG$ 
is standard self-concordant. 
\end{lemma}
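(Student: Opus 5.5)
We prove the claim by splitting $\LG$ into pieces whose self-concordance is standard, then using closure properties of self-concordance. The linear terms $-2\tr(\SXY^\top\gamma)+\tr(\Omega\SYY)$ have vanishing second and third derivatives and can be discarded. Write $\SXX = \frac1n X^\top X$, so that
\[
\tr\bigl(\Omega^{-1}\gamma^\top\SXX\gamma\bigr)=\tr\bigl(\Omega^{-1}Z^\top Z\bigr),\qquad Z:=n^{-1/2}X\gamma\in\R^{n\times q},
\]
with $Z$ linear in $\gamma$. It therefore suffices to show that
\[
g(Z,\Omega):=-\log\det(\Omega)+\tr\bigl(\Omega^{-1}Z^\top Z\bigr)
\]
is standard self-concordant on $\R^{n\times q}\times\Symqpp$. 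The property is preserved under affine precomposition $(\gamma,\Omega)\mapsto(n^{-1/2}X\gamma,\Omega)$ (even non-injective, because Definition~\ref{def:self_concordant_1d} is a pointwise inequality along lines) and under adding affine functions.

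For $g$, we use the Schur complement identity
\[
-\log\det\begin{pmatrix}\Omega & Z^\top\\ Z & I_n\end{pmatrix}
=-\log\det(I_n)-\log\det\bigl(\Omega-Z^\top Z\bigr),
\]
which is not directly what we want. Instead we use the block with the roles arranged as
\[
M(Z,\Omega)=\begin{pmatrix}\Omega & Z^\top\\ Z & I_n\end{pmatrix}\ \text{(no)};\qquad
\text{better: } \tr(\Omega^{-1}Z^\top Z)=\sum_{k=1}^n z_k^\top\Omega^{-1}z_k,
\]
where $z_k^\top$ are the rows of $Z$. The main route is a direct computation along a line $(Z+tU,\Omega+tV)$. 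Set $A=\Omega^{-1/2}V\Omega^{-1/2}$ and $B=(U-ZV\Omega^{-1}... )$. More cleanly, put $W=Z\Omega^{-1/2}$, $\tilde U=U\Omega^{-1/2}$, and $C=\tilde U-WA$. Expanding $\phi(t)=g(Z+tU,\Omega+tV)$ with $(\Omega+tV)^{-1}=\Omega^{-1/2}(I+tA)^{-1}\Omega^{-1/2}$ gives
\[
\phi''(0)=\tr(A^2)+2\|C\|_F^2,\qquad \phi'''(0)=-2\tr(A^3)-6\tr(C^\top C A).
\]
This agrees with the Hessian $H(\xi)$ recorded above, which serves as a check. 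The key step is then the inequality
\[
|2\tr(A^3)+6\tr(C^\top CA)|\le 2\bigl(\tr(A^2)+2\|C\|_F^2\bigr)^{3/2}.
\]
Let $a=\|A\|_F$ and $c=\|C\|_F$, and use $|\tr(A^3)|\le\|A\|_{op}a^2\le a^3$ and $|\tr(C^\top CA)|\le\|A\|_{op}c^2\le ac^2$. The left side is then at most $2a^3+6ac^2=2a(a^2+3c^2)$. It remains to check $a(a^2+3c^2)\le(a^2+2c^2)^{3/2}$. Squaring with $s=c^2/a^2$ reduces this to $(1+3s)^2\le(1+2s)^3$, that is, $1+6s+9s^2\le1+6s+12s^2+8s^3$, which holds for $s\ge0$.

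The main obstacle is obtaining the third-derivative formula correctly. In particular, the cross term $\tr(C^\top CA)$ must come out as a perfect square in the cubic expansion; otherwise the bound will not close. To guard against this we expand $\tr\bigl((W+t\tilde U)(I+tA)^{-1}(W+t\tilde U)^\top\bigr)$ to order $t^3$ and regroup using $C=\tilde U-WA$. The constant $2$ then comes out exactly from the $-\log\det$ part combined with the factor $1$ on the trace term, which is why the scaling $\LG=2\widetilde{\LG}$ matters.
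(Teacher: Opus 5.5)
Your proposal is correct and follows the paper's own argument. It uses the same normalization by $\Omega^{-1/2}$, with $n^{-1/2}X$ in place of $\SXX^{1/2}$ (which gives the identical $C^\top C=\Delta^\top\Delta$), derives the same formulas $\phi''(0)=\tr(A^2)+2\|C\|_F^2$ and $\phi'''(0)=-2\tr(A^3)-6\tr(C^\top CA)$, and closes with the same scalar inequality $1+3s\le(1+2s)^{3/2}$, checked by squaring rather than by the paper's monotonicity argument. In the write-up, delete the abandoned Schur-complement detour and the half-written definition of $B$.
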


\noindent
The proof is given in Appendix~\ref{app:proof_self_concordant}.
With standard self-concordance in place for the scaled objective, we can invoke classical composite self-concordant theory \citep{tran2015composite} to justify damped proximal Newton updates with guaranteed global convergence and fast local rates.

Lemma~\ref{lem:self_concordant_Gauss} also changes the computational regime for computing the penalized estimator. Statistically, the curvature of the Gaussian likelihood is governed by the current precision structure and is highly nonuniform on the positive-definite cone, especially near the boundary where the \(\log\det\) term acts as a barrier. Thus a single global Lipschitz constant is typically too crude to reflect the local geometry induced by \(\Omega\), whereas self-concordance provides a model-intrinsic local metric. Computationally, step-size selection can therefore use local geometric quantities rather than repeated objective evaluations. For proximal-gradient updates, composite self-concordant theory replaces generic Armijo backtracking with a local curvature correction based on quantities such as the local decrement, thereby avoiding expensive function-evaluation line searches \citep{tran2015composite}. This distinction is practically important here because evaluating the smooth loss requires computationally demanding \(\mathcal{O}(q^3)\) matrix operations associated with \(\log\det(\Omega)\) and \(\Omega^{-1}\).

The implications for second-order methods are even more substantial. Self-concordance yields a principled damped proximal Newton step length of the form \(\alpha=(1+\nu)^{-1}\), where \(\nu\) is the local Newton decrement. As shown in Section~\ref{sec:algorithms}, this choice guarantees monotone descent and global convergence without a separate globalization or heuristic backtracking procedure. Once the iterates enter the local Newton region, \(\nu \to 0\), so \(\alpha \to 1\) and the method transitions to full Newton steps with quadratic local convergence. We next develop a self-concordant proximal gradient baseline, whose step-size correction uses only local geometry, and a damped proximal Newton method, which makes fuller use of curvature while retaining rigorous global guarantees.

\section{Algorithms}
\label{sec:algorithms}

By Lemma~\ref{lem:centering_intercept}, all algorithmic statements below are written for preprocessed training data. The intercept has already been profiled out, so the optimization variables are $(\gamma,\Omega)$ only. We solve the composite convex problem
\begin{equation}
\label{eq:composite_obj}
\min_{\xi\in\mathcal{D}_\delta}\;
F(\xi)
:=
\LG(\xi)+\Phi(\xi),
\qquad
\mathcal{D}_\delta
:=
\bigl\{(\gamma,\Omega)\in\R^{p\times q}\times\Symqpp:\ \Omega\succeq \delta I_q\bigr\},
\end{equation}
with safeguarding constant $\delta>0$ and separable penalty $\Phi(\gamma,\Omega)=\lambda_\gamma\|\gamma\|_1+\lambda_\Omega\|\Omega\|_{1,\mathrm{off}}.$
The first- and second-order schemes below are specialized from the composite self-concordant framework of \citet{tran2015composite}. Because of the nonsmooth penalty, the algorithms use an outer loop and an inner iterative solver for the penalized subproblem; some steps add a nested loop. We refer to the main updates of the objective parameters, such as the proximal gradient and proximal Newton steps, as the outer iterations.

\subsection{Backtracking proximal gradient method}

The proximal gradient (PG) scheme below is the composite self-concordant first-order method of \citet{tran2015composite} specialized to $F=\LG+\Phi$. Given $\xi^{(m)}$, the proximal gradient trial point solves
\begin{equation}
\label{eq:prox_grad_subproblem}
s^{(m)}
=
\argmin_{\xi\in\mathcal{D}_\delta}
\Bigl\{
\langle\nabla\LG(\xi^{(m)}),\xi-\xi^{(m)}\rangle
+\frac{L^{(m)}}{2}\|\xi-\xi^{(m)}\|_F^2
+\Phi(\xi)
\Bigr\}.
\end{equation}
Solving \eqref{eq:prox_grad_subproblem} reduces to two proximal operators: the $\gamma$ block is entrywise soft-thresholding, and the $\Omega$ block is a positive-definite proximal update based on soft-thresholding of the off-diagonal entries; see Appendix~\ref{app:omega_prox} and \citet{xue2012positive}. Writing $d^{(m)}:=s^{(m)}-\xi^{(m)}$, we use the composite self-concordant step length
\[
\alpha^{(m)}
=
\frac{\beta_{(m)}^2}{\nu_{(m)}\bigl(\nu_{(m)}+\beta_{(m)}^2\bigr)},
\qquad
\nu_{(m)}:=\|d^{(m)}\|_{\xi^{(m)}},
\qquad
\beta_{(m)}:=\sqrt{L^{(m)}}\,\|d^{(m)}\|_F.
\]

\begin{algorithm}[H]
\caption{Backtracking proximal gradient for \eqref{eq:composite_obj}}
\label{alg:proxgrad}
\KwIn{Preprocessed training data $(X,Y)$; penalties $\lambda_\gamma,\lambda_\Omega$; safeguard $\delta>0$; tolerance $\varepsilon>0$.}
\KwInit{Choose $\xi^{(0)}\in\mathcal{D}_\delta$ and $L^{(0)}>0$.}
\For{$m=0,1,2,\dots$}{
  Set $L^{(m)} \leftarrow L^{(m-1)}$ (with $L^{(-1)}:=L^{(0)}$)\;
  \While{\textnormal{True}}{
    Compute $s^{(m)}$ from \eqref{eq:prox_grad_subproblem} and $d^{(m)}=s^{(m)}-\xi^{(m)}$\;
    Compute $\nu_{(m)}=\|d^{(m)}\|_{\xi^{(m)}}$ and $\beta_{(m)}=\sqrt{L^{(m)}}\|d^{(m)}\|_F$\;
    \eIf{$\nu_{(m)}^{2}/\beta_{(m)}^{2}+\nu_{(m)} > 1$}{
      $L^{(m)} \leftarrow 2L^{(m)}$\;
    }{
      \textbf{break}\;
    }
  }
  Set $\alpha^{(m)} \leftarrow \beta_{(m)}^{2}/\{\nu_{(m)}(\nu_{(m)}+\beta_{(m)}^{2})\}$\;
  Update $\xi^{(m+1)} \leftarrow \xi^{(m)} + \alpha^{(m)} d^{(m)}$\;
  \If{$\|d^{(m)}\|_F \le \varepsilon$}{
    \textbf{break}\;
  }
}
\end{algorithm}

\subsection{Damped proximal Newton method}

The proximal Newton (PN) scheme below is the composite self-concordant second-order method of \citet{tran2015composite} specialized to $F=\LG+\Phi$. At iterate $\xi^{(m)}$, the proximal Newton subproblem is
\begin{equation}
\label{eq:prox_newton_subproblem}
s^{(m)}
=
\argmin_{\xi\in\mathcal{D}_\delta}
\Bigl\{
\langle\nabla\LG(\xi^{(m)}),\xi-\xi^{(m)}\rangle
+\frac12\nabla^2\LG[\xi^{(m)}]\bigl(\xi-\xi^{(m)},\xi-\xi^{(m)}\bigr)
+\Phi(\xi)
\Bigr\}.
\end{equation}
With $\Sigma^{(m)}:=(\Omega^{(m)})^{-1}$ and stacked representation of $\xi$, this is equivalently
\begin{equation}
\label{eq:pn_matrix_form}
s^{(m)}
=
\argmin_{\xi\in\mathcal{D}_\delta}
\Bigl\{
\frac12 \tr\bigl(\xi^\top H^{(m)}\xi\,\Sigma^{(m)}\bigr)
+ \tr\bigl(\xi^\top G^{(m)}\bigr)
+ \Phi(\xi)
\Bigr\},
\end{equation}
where $H^{(m)}:=H(\xi^{(m)})$ and
\(G^{(m)}
:=
\nabla\LG(\xi^{(m)})-H^{(m)}\xi^{(m)}\Sigma^{(m)}.\)
The direction $d^{(m)}:=s^{(m)}-\xi^{(m)}$ is damped with the standard self-concordant step length
\(\alpha^{(m)}
=
(1+\nu_{(m)})^{-1},
\
\nu_{(m)}:=\|d^{(m)}\|_{\xi^{(m)}}.\)

\begin{algorithm}[H]
\caption{Damped proximal Newton for \eqref{eq:composite_obj}}
\label{alg:proxnewton}
\KwIn{Preprocessed training data $(X,Y)$; regularization parameters $\lambda_\gamma,\lambda_\Omega$; safeguard $\delta>0$; tolerances $\varepsilon_{\mathrm{step}},\varepsilon_{\mathrm{newton}}>0$.}
\KwInit{Choose $\xi^{(0)}\in\mathcal{D}_\delta$.}
\For{$m=0,1,2,\dots$}{
  Approximately solve \eqref{eq:pn_matrix_form} by Algorithm~\ref{alg:admm_pn} to obtain $s^{(m)}$\;
  Set $d^{(m)}\leftarrow s^{(m)}-\xi^{(m)}$ and $\nu_{(m)}\leftarrow \|d^{(m)}\|_{\xi^{(m)}}$\;
  Set $\alpha^{(m)}\leftarrow (1+\nu_{(m)})^{-1}$\;
  Update $\xi^{(m+1)} \leftarrow \xi^{(m)}+\alpha^{(m)}d^{(m)}$\;
  \If{$\|d^{(m)}\|_F \le \varepsilon_{\mathrm{step}}$ \textbf{or} $\nu_{(m)}\le \varepsilon_{\mathrm{newton}}$}{
    \textbf{break}\;
  }
}
\end{algorithm}

\subsection{ADMM solver for the proximal Newton subproblem}
\label{subsec:admm_pn}

We solve the PN subproblem approximately by the alternating direction method of multipliers (ADMM). A closely related splitting appears in the supplementary material of \citet{zhu2020convex}, and we adapt it to our self-concordant PN framework. Introducing a copy variable $\zeta$ gives
\begin{equation}
\label{eq:pn_splitting}
\min_{\xi\in\R^{(p+q)\times q},\,\zeta\in\mathcal{D}_\delta}\;
\frac12 \tr\bigl(\xi^\top H^{(m)}\xi\,\Sigma^{(m)}\bigr)
+ \tr\bigl(\xi^\top G^{(m)}\bigr)
+ \Phi(\zeta),
\qquad \text{s.t. }\ \xi=\zeta.
\end{equation}
With scaled dual variable $U$ and penalty parameter $\rho>0$, the ADMM updates are
\begin{align}
\label{eq:admm_xi}
\xi^{(k+1)}
&=
\argmin_{\xi}\;
\frac12 \tr\bigl(\xi^\top H^{(m)}\xi\,\Sigma^{(m)}\bigr)
+ \tr\bigl(\xi^\top G^{(m)}\bigr)
+\frac{\rho}{2}\|\xi-\zeta^{(k)}+U^{(k)}\|_F^2,\\
\label{eq:admm_zeta}
\zeta^{(k+1)}
&=
\argmin_{\zeta\in\mathcal{D}_\delta}\;
\Phi(\zeta)+\frac{\rho}{2}\|\xi^{(k+1)}-\zeta+U^{(k)}\|_F^2,\\
\label{eq:admm_dual}
U^{(k+1)}
&=
U^{(k)}+\xi^{(k+1)}-\zeta^{(k+1)}.
\end{align}
The $\xi$ update reduces to a Sylvester system via the standard diagonalization argument of \citet{zhu2020convex}, while the $\Omega$ block proximal update and the primal-dual stopping rules are summarized in Appendix~\ref{app:gap_framework}. The inner ADMM loop is terminated when
\begin{equation}
\label{eq:pn_gap_rel_main}
\operatorname{gap}_{\mathrm{PN},m}^{\mathrm{rel}}\bigl(\zeta^{(k+1)},W^{(k+1)}\bigr)
\le
\varepsilon_{\mathrm{admm}},
\end{equation}
where the dual certificate $W^{(k+1)}$ and the explicit dual objective are given in Appendix~\ref{app:pn_gap}.

\begin{algorithm}[H]
\caption{ADMM for the PN subproblem \eqref{eq:pn_splitting} at outer iterate $m$}
\label{alg:admm_pn}
\KwIn{$H^{(m)},\Sigma^{(m)},G^{(m)}$; penalty $\rho>0$; inner tolerance $\varepsilon_{\mathrm{admm}}>0$.}
\KwInit{Warm start $\xi^{(0)},\zeta^{(0)},U^{(0)}$.}
\For{$k=0,1,2,\dots$}{
  Update $\xi^{(k+1)}$ by solving \eqref{eq:admm_xi}\;
  Update $\gamma_\zeta^{(k+1)}$ by entrywise soft-thresholding and update $\Omega_\zeta^{(k+1)}$ by Appendix~\ref{app:omega_prox}\;
  Update $U^{(k+1)} \leftarrow U^{(k)}+\xi^{(k+1)}-\zeta^{(k+1)}$\;
  Form the dual feasible certificate $W^{(k+1)}$ and evaluate \eqref{eq:pn_gap_rel_main}\;
  \If{$\operatorname{gap}_{\mathrm{PN},m}^{\mathrm{rel}}\bigl(\zeta^{(k+1)},W^{(k+1)}\bigr)\le \varepsilon_{\mathrm{admm}}$}{
    \textbf{break}\;
  }
}
\KwOut{$s^{(m)} \leftarrow \zeta^{(k+1)}$.}
\end{algorithm}

\begin{remark}[Relation to \citet{zhu2020convex}]
The ADMM splitting above is close to the supplementary algorithm of \citet{zhu2020convex}, but there are two main differences. First, \citet{zhu2020convex} does not establish self-concordance of the reparameterized loss, so the Newton direction in that work is globalized by backtracking. Once Lemma~\ref{lem:self_concordant_Gauss} is available, the PN update uses the self-concordant step
\(
\alpha^{(m)}=(1+\nu_{(m)})^{-1},
\
\nu_{(m)}:=\|d^{(m)}\|_{\xi^{(m)}},
\)
which gives monotone descent, global convergence, and the usual second-order local rate under composite self-concordant theory. Second, the implementation in \citet{zhu2020convex} effectively applies a single soft-thresholding step on the off-diagonal entries of the $\Omega$ block. That shortcut is valid whenever the soft-thresholded candidate already satisfies $\Omega\succeq \delta I_q$, in which case Appendix~\ref{app:omega_prox} also terminates immediately. When the thresholded candidate is not positive definite, however, the PN subproblem needs the additional ADMM correction from Appendix~\ref{app:omega_prox} to enforce the safeguard; without this correction the PN update can fail.
\end{remark}

\subsection{Penalty path construction and warm starts}
\label{sec:reg_path}

The regularization path is initialized at the null model, meaning the largest penalty pair for which the optimizer has no active predictor--response links and a diagonal precision matrix:
\begin{equation}
\label{eq:null_corner_main}
\lambda_\gamma^{\max}
:=
2\norm{\SXY}_{\max},
\quad
\lambda_\Omega^{\max}
:=
\norm{\SYY}_{\max,\mathrm{off}},
\quad
\gamma^{(0)}
=
\mathbf 0,
\quad
\Omega^{(0)}
=
\diag(\SYY)^{-1}.
\end{equation}
The Karush--Kuhn--Tucker (KKT) verification is given in Appendix~\ref{app:proof_null_corner}. More generally, let
\[
\mathcal P=\{(\lambda_\gamma^{(\ell)},\lambda_\Omega^{(\ell)})\}_{\ell=1}^{L}
\]
denote any penalty path chosen by the user. Warm starts are then obtained by initializing each fit from a neighboring previously computed solution. A common choice is the geometric path
\[
\lambda_\gamma^{(\ell)}=\lambda_\gamma^{\max}\eta_\gamma^{\,\ell-1},
\qquad
\lambda_\Omega^{(\ell)}=\lambda_\Omega^{\max}\eta_\Omega^{\,\ell-1},
\qquad
\ell=1,\ldots,L,
\]
where $\eta_\gamma=(r_\gamma)^{1/(L-1)}$ and $\eta_\Omega=(r_\Omega)^{1/(L-1)}$ for chosen ratios $r_\gamma,r_\Omega\in(0,1)$. When a rectangular search over the two penalties is desired, one may instead use the two-dimensional grid
\[
\lambda_{\gamma,a}=\lambda_\gamma^{\max}\eta_\gamma^{\,a-1},\quad a=1,\ldots,L_\gamma,
\qquad
\lambda_{\Omega,b}=\lambda_\Omega^{\max}\eta_\Omega^{\,b-1},\quad b=1,\ldots,L_\Omega,
\]
together with neighborhood warm starts.

\subsection{Raw scale recovery and evaluation metrics}
\label{subsec:raw_eval}

After solving the problem on the working scale for $(\hat\gamma,\hat\Omega)$, the corresponding slope and intercept on the raw scale are recovered by \(\hat\beta^{\mathrm{raw}}=D_x^{-1}\hat\gamma\hat\Omega^{-1}\) and \(\hat\alpha=(\bar y-\bar x\,\hat\beta^{\mathrm{raw}})^\top.\) For the conditional quantities used below it is also convenient to define \(\hat\gamma^{\mathrm{raw}}:=\hat\beta^{\mathrm{raw}}\hat\Omega=D_x^{-1}\hat\gamma\) and \(\hat\gamma_0:=\hat\alpha^\top\hat\Omega=\bar y\,\hat\Omega-\bar x\,D_x^{-1}\hat\gamma.\)

For later simulation and real-data reporting, model selection and out-of-sample assessment are carried out on the raw data scale. For a raw-scale test sample $\{(x_i^{\mathrm{raw}},y_i^{\mathrm{raw}})\}_{i=1}^{n_{\mathrm{te}}}$, we first define the ordinary prediction mean squared error (MSE) \( \operatorname{MSE}_{\mathrm{te}}=\frac{1}{n_{\mathrm{te}}\cdot q}\sum_{i=1}^{n_{\mathrm{te}}}\left\| y_i^{\mathrm{raw}} - \hat\alpha - (\hat\beta^{\mathrm{raw}})^\top x_i^{\mathrm{raw}}\right\|_2^2.\)

To evaluate the conditional structure encoded by $(\hat\gamma,\hat\Omega)$, define for coordinate $j$
\begin{equation*}
\hat m_j^{\mathrm{raw}}(x^{\mathrm{raw}},y_{-j}^{\mathrm{raw}})
=
\frac{\hat\gamma_{0,j} + (x^{\mathrm{raw}})^\top \hat\gamma^{\mathrm{raw}}_{\cdot j} - \hat\Omega_{j,-j}y_{-j}^{\mathrm{raw}}}{\hat\Omega_{jj}},
\end{equation*}
and then the conditional mean squared error (CMSE)
\begin{equation*}
\operatorname{CMSE}_{\mathrm{te}}
=
\frac{1}{n_{\mathrm{te}}\cdot  q}
\sum_{i=1}^{n_{\mathrm{te}}}\sum_{j=1}^{q}
\Bigl\{
 y_{ij}^{\mathrm{raw}} - \hat m_j^{\mathrm{raw}}(x_i^{\mathrm{raw}},y_{i,-j}^{\mathrm{raw}})
\Bigr\}^{2}.
\end{equation*}
For MRCE, the same raw-scale evaluation formulas apply after inserting the fitted raw-scale parameters $(\hat\alpha,\hat\beta,\hat\Omega)$ returned by that method and setting \(\hat\gamma^{\mathrm{raw}}=\hat\beta\hat\Omega\) and \(\hat\gamma_0=\hat\alpha^\top\hat\Omega.\)

\subsection{Convergence guarantees for first- and second-order methods}
\label{subsec:conv_guarantees}
For Algorithm~\ref{alg:proxgrad}, the analytic step size above is the descent rule of \citet[Lemma~12]{tran2015composite}, so each accepted proximal-gradient iterate decreases the objective. If the relevant level set is bounded and the diagonal metric is uniformly bounded below, the iterates converge globally to the unique minimizer by \citet[Theorem~13]{tran2015composite}; under additional strong regularity and Hessian approximation conditions, they converge locally at a linear rate by \citet[Theorem~14]{tran2015composite}. For Algorithm~\ref{alg:proxnewton}, the damping $\alpha^{(m)}=(1+\nu_{(m)})^{-1}$ is the optimal self-concordant step in \citet[Theorem~6]{tran2015composite}, guaranteeing monotone decrease. Once the Newton decrement is small enough, \citet[Theorem~7]{tran2015composite} gives local quadratic convergence, matching the classical self-concordant Newton behavior in \citet{nesterov2004introductory}.

\section{Simulation Study}\label{sec:sim}

We report a simulation study focused on algorithmic robustness and the pathwise computational behavior of the proposed \textsc{SCA-MRCE} solvers. We generate a prespecified sparse multivariate linear regression model with Gaussian errors, fit the full regularization path on a single realization, and compare the terminated penalized objective values, elapsed time, and outer iteration counts across solvers. This design lets us compare the solvers on a common problem instance without additional Monte Carlo variation.

Let $X_i^{\mathrm{raw}}\in\R^p$ and $Y_i^{\mathrm{raw}}\in\R^q$ follow the independently and identically distributed (i.i.d.) model \(X_i^{\mathrm{raw}}
\stackrel{\mathrm{i.i.d.}}{\sim}
N_p(0,I_p),
\
Y_i^{\mathrm{raw}}=
(\beta^\star)^\top X_i^{\mathrm{raw}}+\varepsilon_i,
\
\varepsilon_i
\stackrel{\mathrm{i.i.d.}}{\sim}
N_q\bigl(0,(\Omega^\star)^{-1}\bigr).\)
Because the penalized convex program is formulated in $(\gamma,\Omega)$, sparsity is imposed on the natural parameter \(\gamma^\star=\beta^\star\Omega^\star,\) rather than directly on $\beta^\star$. Specifically, the support of $\gamma^\star$ is generated at a prespecified sparsity level (5\% nonzeros per response) on the $p\times q$ coefficient array; conditional on that support, the nonzero entries are assigned independent random signs and magnitudes drawn from a distribution bounded away from zero, and all remaining entries are set to zero. We generate the precision matrix $\Omega$ by independently activating each upper triangular off-diagonal entry with probability $0.05$, assigning each active edge weight $0.5$, symmetrizing the matrix, and adding a diagonal shift so that the resulting sparse positive-definite precision matrix has the prescribed condition number \(\kappa(\Omega^\star)\in\{2,10,100\}\). Finally, we set \(\beta^\star=\gamma^\star(\Omega^\star)^{-1}.\)
This is the appropriate simulation target because the penalized convex program acts on $(\gamma,\Omega)$.

We use a single realized data set with $n=1000$, $p=q=50$, and a geometric path of length $60$ initialized at the closed-form null model from Section~\ref{sec:reg_path}, with ratio $10^{-4}$ between the smallest and largest penalties. We consider three conditioning regimes for the precision matrix, $\kappa(\Omega^\star)\in\{2,10,100\}$. Warm starts are used along the path. We compare four \textsc{SCA-MRCE} solvers: \texttt{Legacy\_PN}, \texttt{Aggressive\_PN}, \texttt{Legacy\_PG}, and \texttt{Aggressive\_PG}. The legacy variants are conservative reference implementations with larger inner solve budgets. The aggressive variants truncate the same outer algorithms: \texttt{Aggressive\_PN} caps the PN subproblem ADMM at 100 iterations and the $\Omega$ proximal ADMM at one iteration per outer step, while \texttt{Aggressive\_PG} truncates the $\Omega$ proximal update to one iteration along a PG path. The comparison therefore isolates whether reducing the expensive inner work changes the pathwise solutions materially or mainly lowers the computational cost.

\begin{figure}[htbp]
    \centering
    \begin{minipage}[b]{0.315\textwidth}
        \centering
        \includegraphics[width=\textwidth]{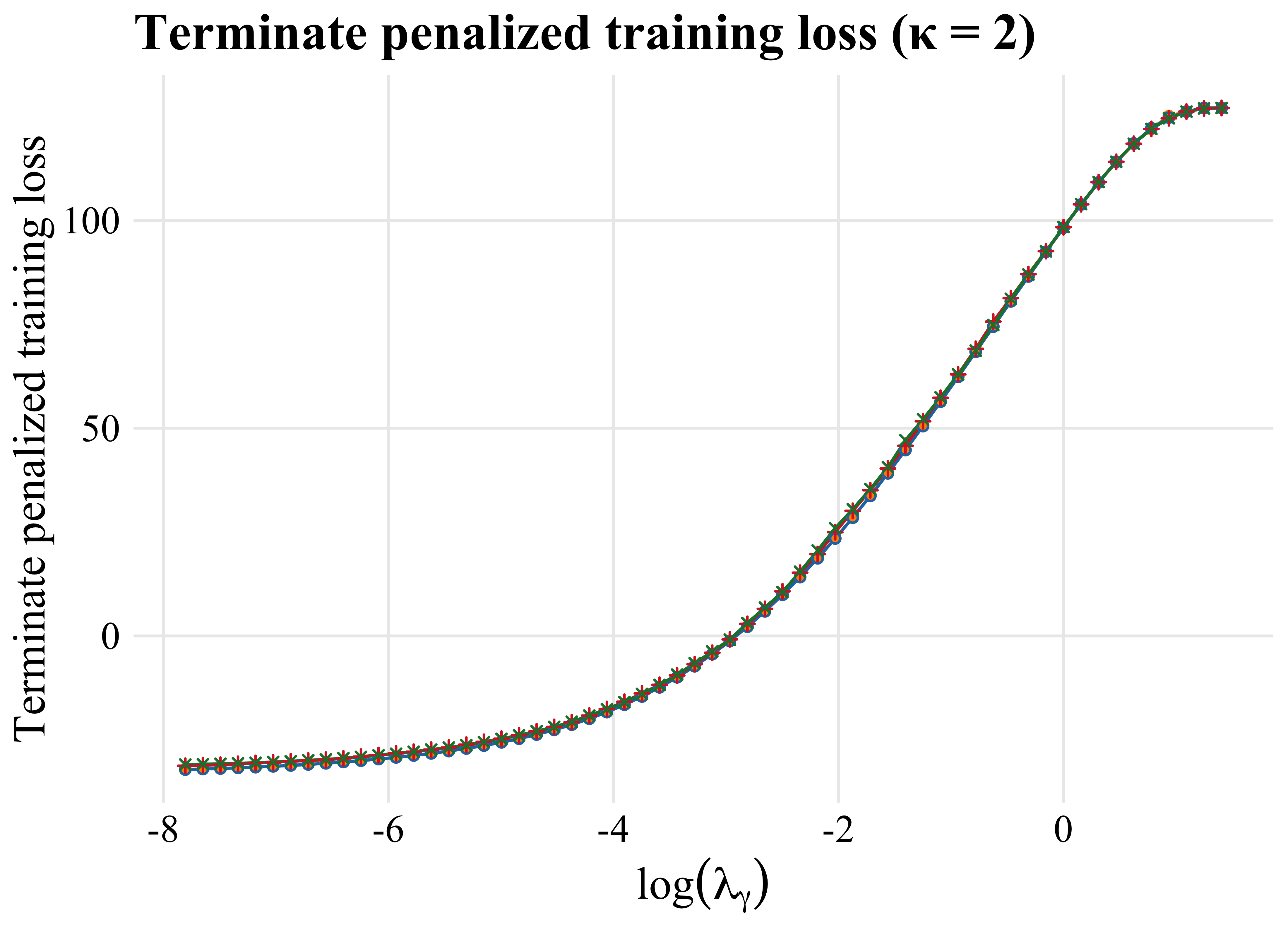}
    \end{minipage}
    \hfill
    \begin{minipage}[b]{0.315\textwidth}
        \centering
        \includegraphics[width=\textwidth]{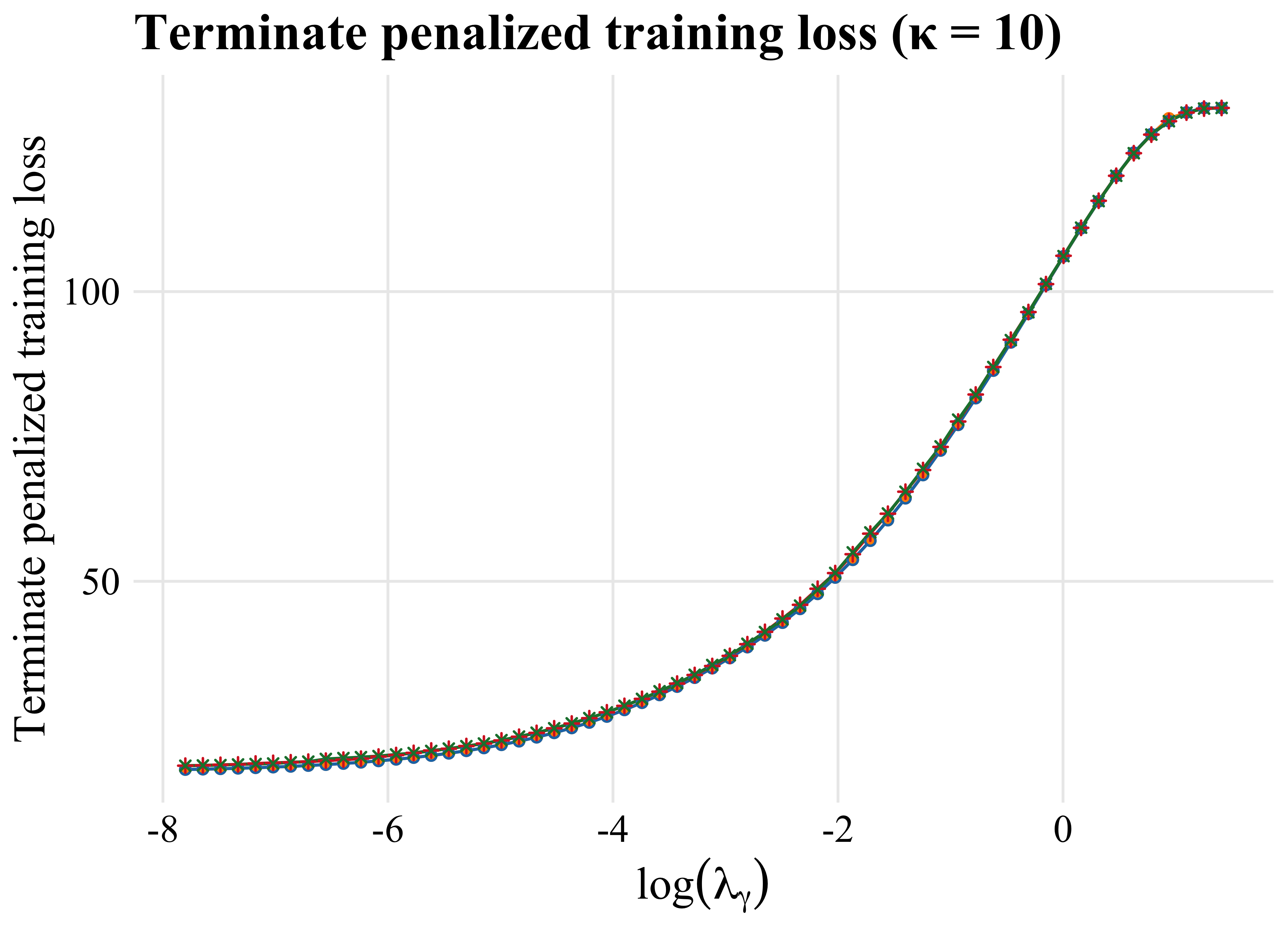}
    \end{minipage}
    \hfill
    \begin{minipage}[b]{0.315\textwidth}
        \centering
        \includegraphics[width=\textwidth]{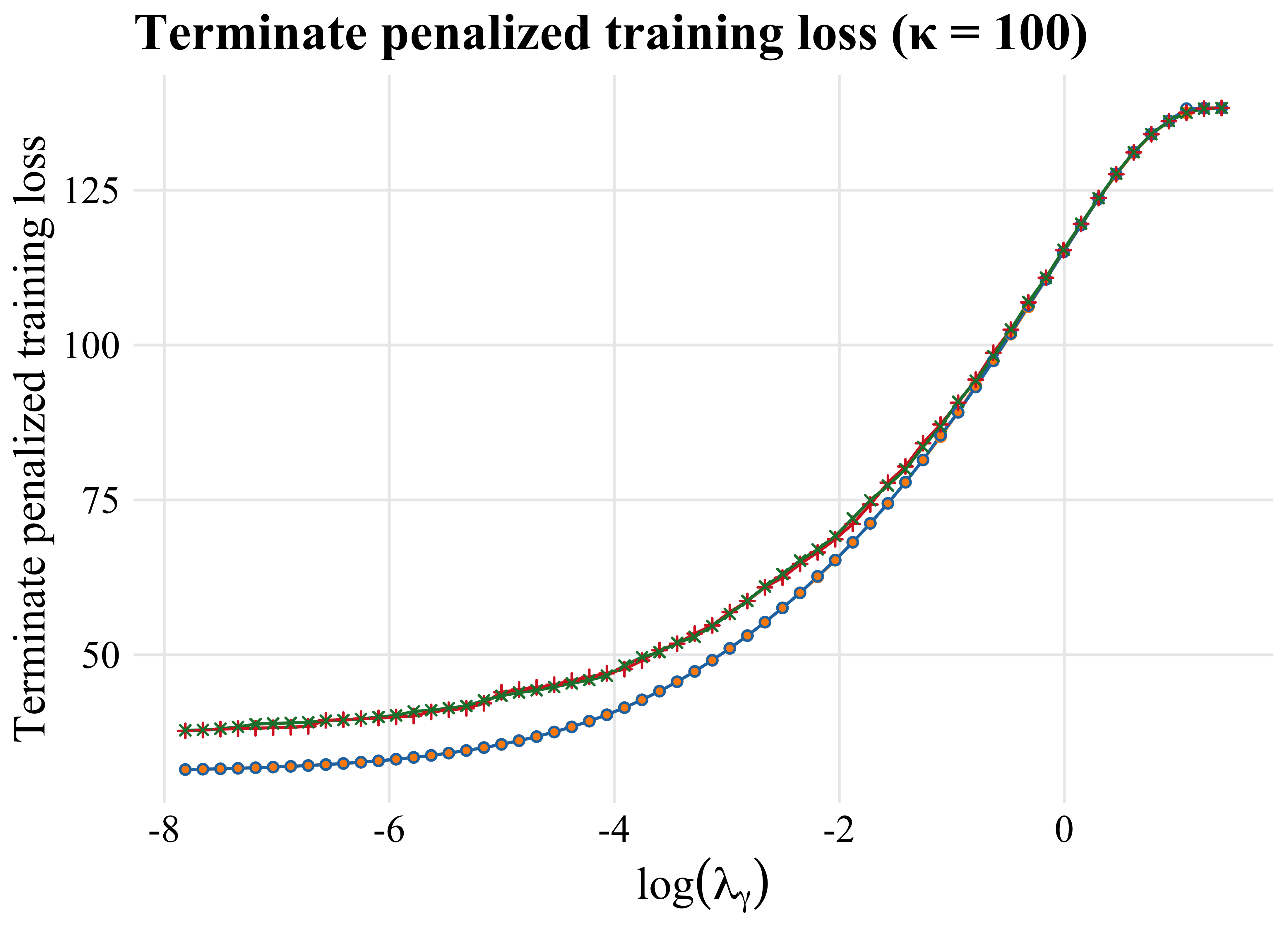}
    \end{minipage}

    \vspace{0.5em}

    \begin{minipage}[b]{0.315\textwidth}
        \centering
        \includegraphics[width=\textwidth]{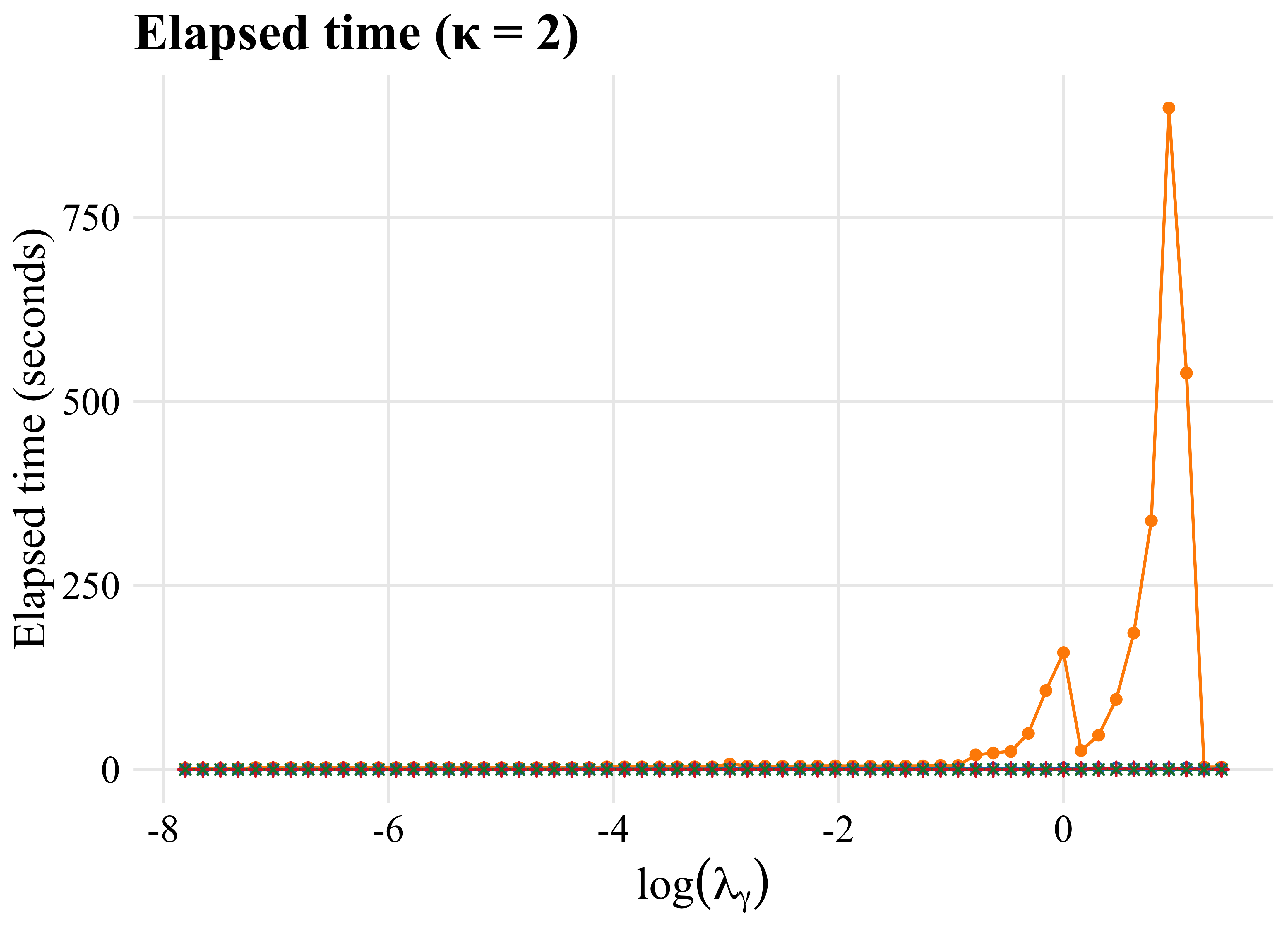}
    \end{minipage}
    \hfill
    \begin{minipage}[b]{0.315\textwidth}
        \centering
        \includegraphics[width=\textwidth]{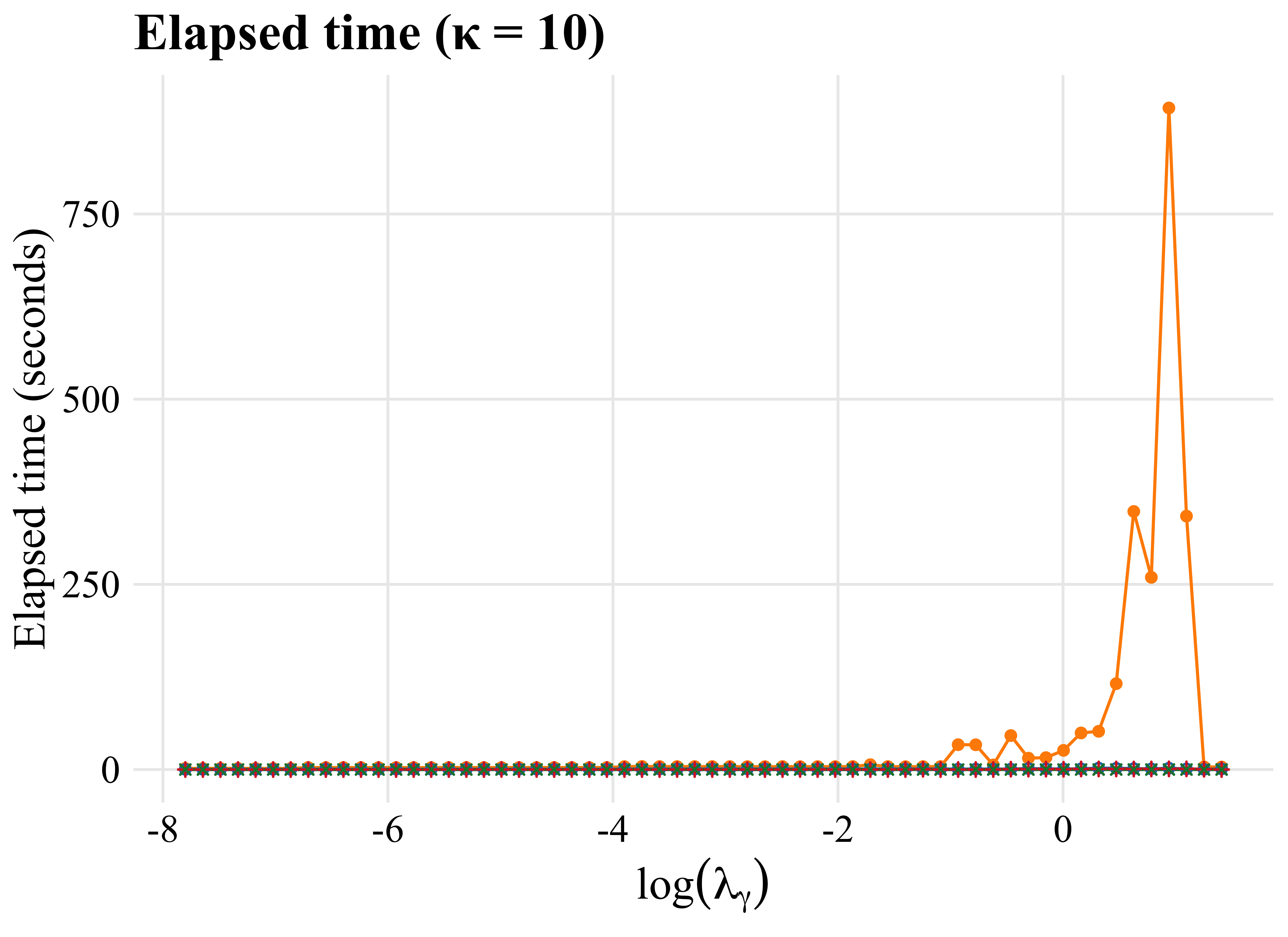}
    \end{minipage}
    \hfill
    \begin{minipage}[b]{0.315\textwidth}
        \centering
        \includegraphics[width=\textwidth]{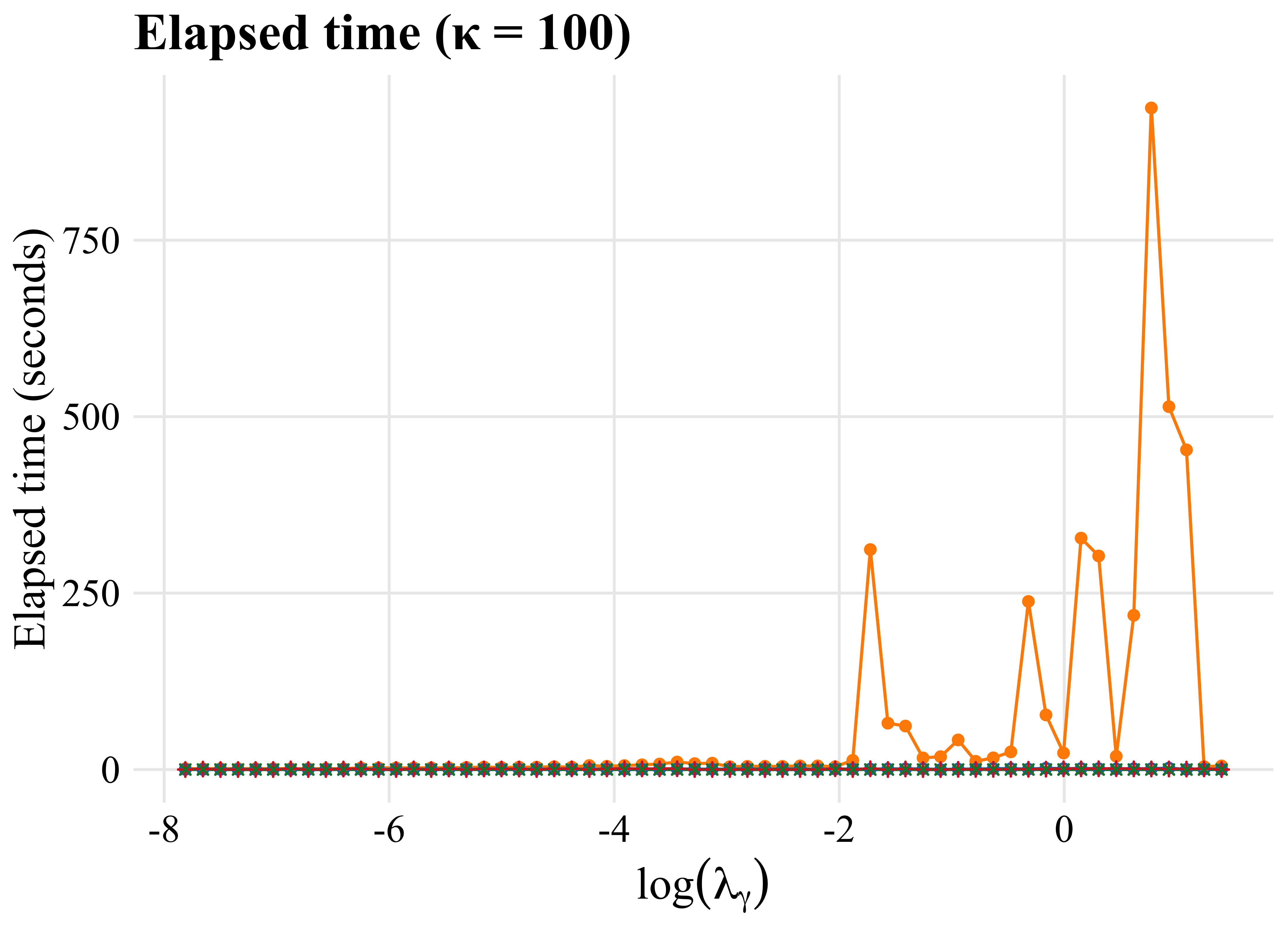}
    \end{minipage}

    \vspace{0.5em}

    \begin{minipage}[b]{0.315\textwidth}
        \centering
        \includegraphics[width=\textwidth]{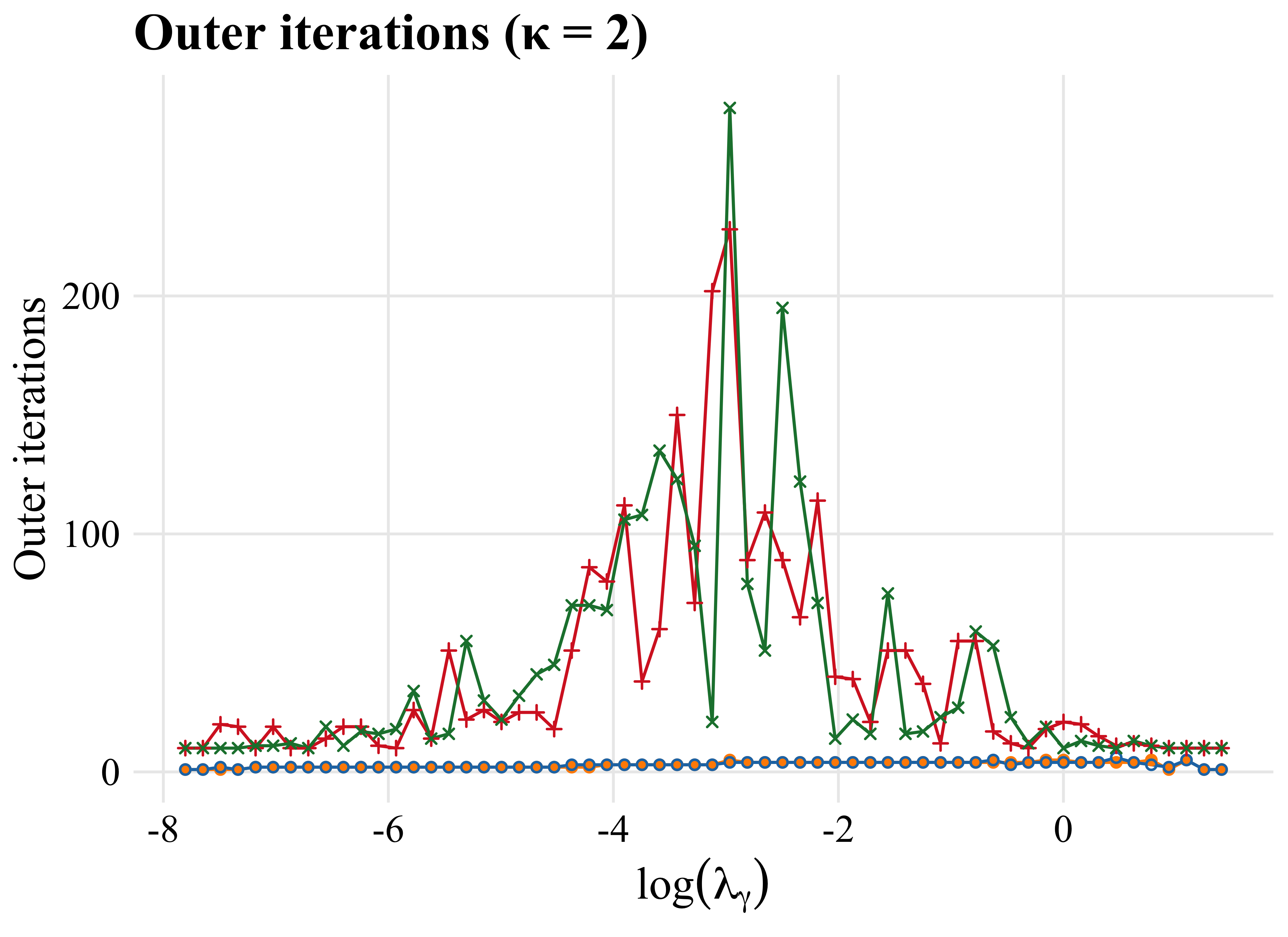}
    \end{minipage}
    \hfill
    \begin{minipage}[b]{0.315\textwidth}
        \centering
        \includegraphics[width=\textwidth]{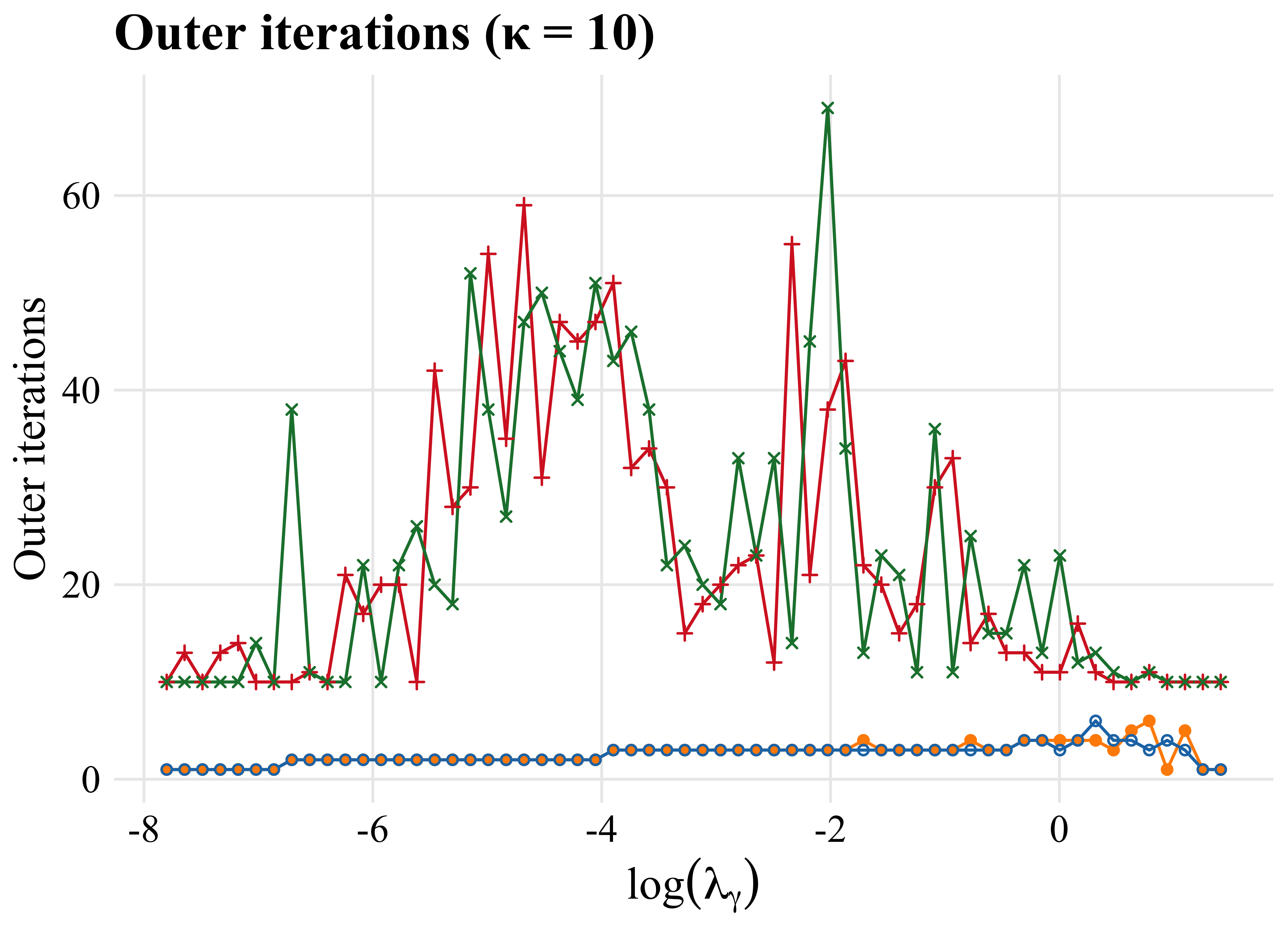}
    \end{minipage}
    \hfill
    \begin{minipage}[b]{0.315\textwidth}
        \centering
        \includegraphics[width=\textwidth]{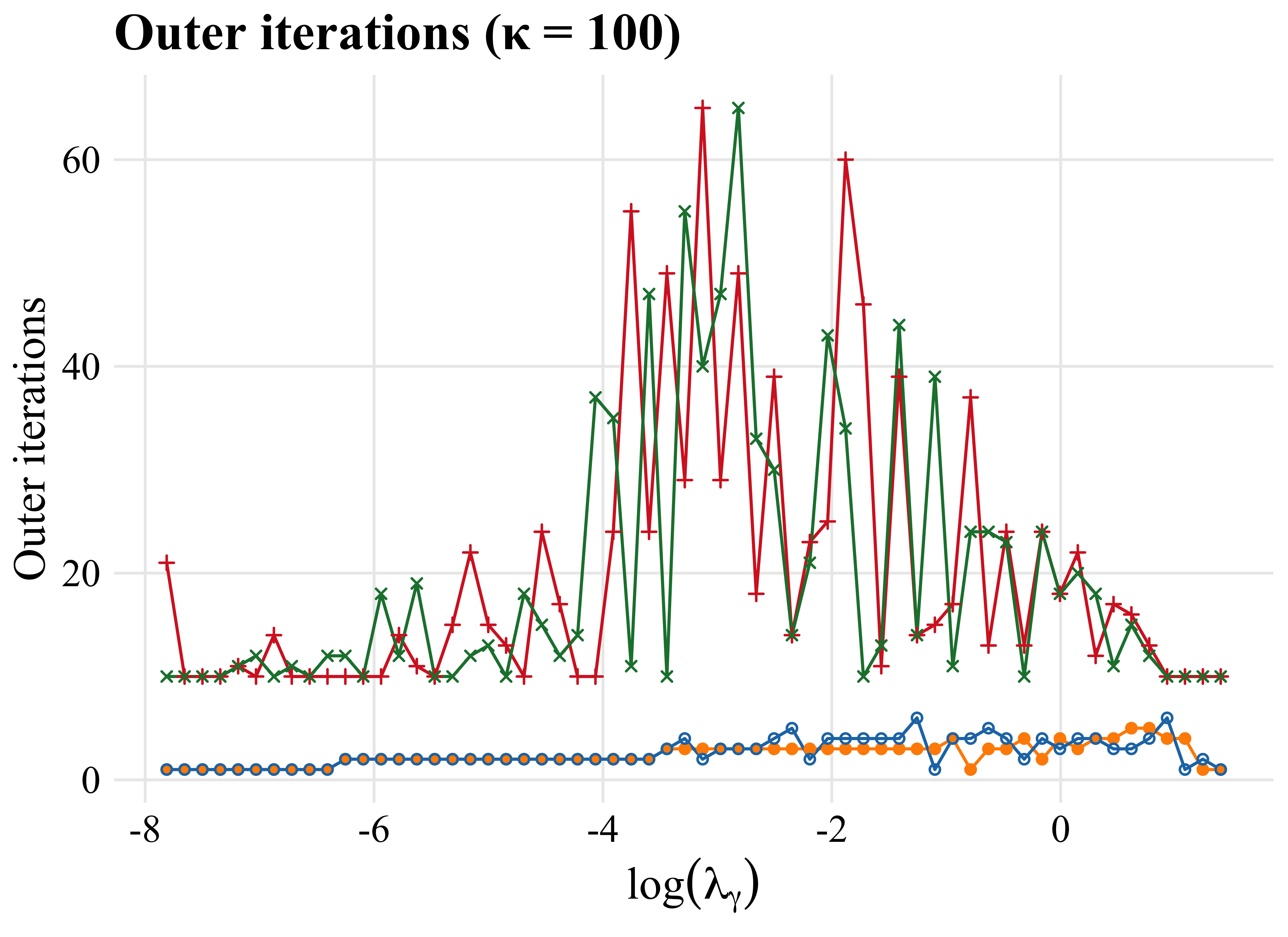}
    \end{minipage}

    \caption{Pathwise computational comparison for the four \textsc{SCA-MRCE} solver configurations. Columns correspond to $\kappa(\Omega^\star)\in\{2,10,100\}$ from left to right. The top row reports the terminated penalized training loss, the middle row reports elapsed time in seconds, and the bottom row reports the number of outer iterations to termination. In all panels the horizontal axis is $\log(\lambda_{\gamma})$. Orange filled circles denote \texttt{Legacy\_PN}, blue open circles denote \texttt{Aggressive\_PN}, red plus markers denote \texttt{Legacy\_PG}, and green x markers denote \texttt{Aggressive\_PG}. The aggressive variants track the legacy objective values closely, while the main computational differences appear in the amount of inner work required at each path point.}
    \label{fig:solver_benchmark_grid}
\end{figure}

\begin{table}[htbp]
\centering
\small
\begin{tabular}{lccc}
\toprule
Solver & Mean terminated penalized loss & Total elapsed time (sec) & Mean outer iterations \\
\midrule
\texttt{Legacy\_PN} & 45.6635 & 8893.73 & 2.66 \\
\texttt{Aggressive\_PN} & 45.6688 & 113.67 & 2.71 \\
\texttt{Legacy\_PG} & 47.5741 & 46.56 & 28.10 \\
\texttt{Aggressive\_PG} & 47.6367 & 4.88 & 28.39 \\
\bottomrule
\end{tabular}
\caption{Summary of the fixed-data pathwise comparison. Each row aggregates results over three condition-number settings, $\kappa(\Omega^\star)\in\{2,10,100\}$, and $60$ penalty pairs per setting. ``Mean terminated penalized loss'' is the average final penalized training objective at solver termination over that $3\times 60$ grid. ``Total elapsed time (sec)'' is the cumulative elapsed time over all runs for the corresponding solver, and ``Mean outer iterations'' is the average number of outer-loop updates per penalty pair.}
\label{tab:sim_solver_benchmark}
\end{table}

Figure~\ref{fig:solver_benchmark_grid} and Table~\ref{tab:sim_solver_benchmark} point to the same conclusion. PN uses far fewer outer iterations than PG on the same path, but its elapsed time depends strongly on how accurately the inner second-order subproblems are solved. Within the PG family, \texttt{Legacy\_PG} and \texttt{Aggressive\_PG} produce very similar terminated losses and outer-iteration profiles, and \texttt{Aggressive\_PG} is the fastest implementation overall; when the soft-thresholded $\Omega$ candidate already satisfies the safeguard, the two PG updates are nearly identical. PN remains attractive when a more accurate solution is needed, but for routine path computation the PG variants are usually the better choice in elapsed time. Truncating the PN inner solves reduces cost substantially, although the resulting inexact steps are more sensitive than the PG updates and can stall at difficult path points.

\section{Real Data Analysis}
\label{sec:realdata}

We apply classical-parameterization MRCE and two \textsc{SCA-MRCE} implementations, \texttt{Legacy\_PG} and \texttt{Aggressive\_PG}, to the Mice Protein Expression data set \citep{higuera2015mice}. The data contain measurements on $72$ mice, each observed at $15$ time points, for a total of $1080$ samples. We take the $q=77$ protein measurements as the multivariate response $Y^{\mathrm{raw}}$, and we construct $X^{\mathrm{raw}}$ from the experimental factors Genotype, Treatment, and Behavior. Across $100$ replicates, every mouse contributes exactly $5$ rows to training, $5$ to validation, and $5$ to test, so $n_{\mathrm{tr}}=n_{\mathrm{val}}=n_{\mathrm{te}}=360$. Predictor preprocessing follows Section~\ref{subsec:centering_intercept}: training predictors are standardized, training responses are centered only, and validation and test evaluations are carried out on the raw scale. Each method is tuned over the same $7\times 7$ geometric grid with ratio $10^{-5}$ between the smallest and largest penalties on both axes, and the tuning rule is validation MSE. Final reporting uses the test MSE and test CMSE on the raw scale defined in Section~\ref{subsec:raw_eval}.

Table~\ref{tab:realdata_summary} summarizes the study over $100$ replicates. Classical-parameterization MRCE, \texttt{Legacy\_PG}, and \texttt{Aggressive\_PG} have very similar validation and test MSE, with differences appearing only in the fourth decimal place. The two \textsc{SCA-MRCE} implementations also have smaller and less variable test CMSE than MRCE. The two PG variants are essentially indistinguishable in the reported prediction metrics, which is consistent with the fact that at most grid points the soft-thresholded $\Omega$ update already satisfies the safeguard and therefore makes the legacy and aggressive PG steps coincide. The main difference is computational: the recorded average fitting time is $2661.271$ seconds for MRCE and about $6$ seconds for the two PG variants, a reduction of roughly $4.5\times 10^2$ in average elapsed time per replicate.

\begin{table}[htbp]
\centering
\small
\begin{tabular}{lcccc}
\toprule
Method & Validation MSE & Test MSE & Test CMSE & Elapsed time (sec) \\
\midrule
MRCE & 0.057189 (0.001722) & 0.057271 (0.001710) & 0.025568 (0.018721) & 2661.271 \\
\texttt{Legacy\_PG} & 0.057227 (0.001701) & 0.057354 (0.001657) & 0.020665 (0.001225) & 5.916 \\
\texttt{Aggressive\_PG} & 0.057227 (0.001701) & 0.057354 (0.001657) & 0.020665 (0.001225) & 5.987 \\
\bottomrule
\end{tabular}
\caption{Summary of the real data study across $100$ replicates. The first three metric columns report mean (standard deviation) across the repeated train/validation/test splits. The elapsed time column reports the corresponding average fitting time recorded by the implementation.}
\label{tab:realdata_summary}
\end{table}

Figures~\ref{fig:realdata_validation} and \ref{fig:realdata_time} display heatmaps for one replicate; they are visual summaries, not averages over $100$ replicates. In every panel, the horizontal axis is $\lambda_{\Omega}$; the vertical axis is $\lambda_{\beta}$ for MRCE and $\lambda_{\gamma}$ for the two \textsc{SCA-MRCE} methods. Figure~\ref{fig:realdata_validation} is consistent with Table~\ref{tab:realdata_summary}: there is no evidence that the convex reparameterization sacrifices predictive accuracy. The selected models (red diamonds) lie away from the most heavily regularized row, and the two PG heatmaps are nearly indistinguishable.

\begin{figure}[htbp]
\centering
\includegraphics[width=\textwidth]{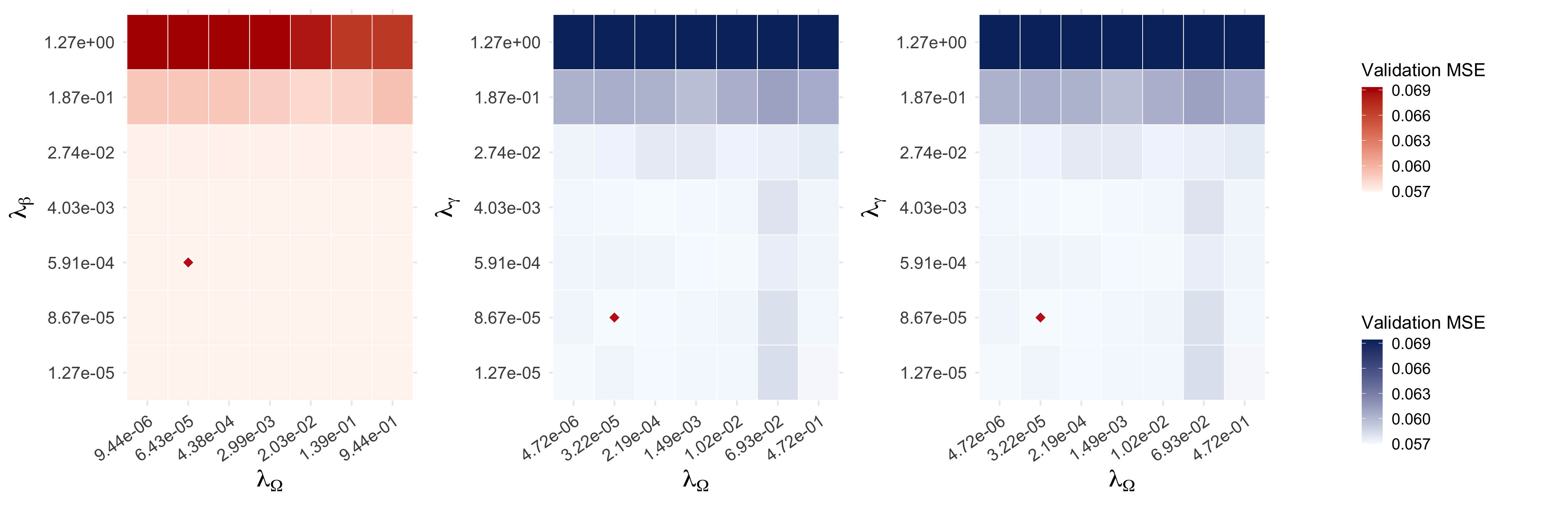}
\caption{Representative validation MSE heatmaps from one replicate. From left to right the panels correspond to MRCE, \texttt{Legacy\_PG}, and \texttt{Aggressive\_PG}. The legend column on the right stacks the MRCE red/white legend above the shared PG blue/white legend. The red diamonds identify the models selected by validation MSE. The horizontal axis is $\lambda_{\Omega}$ throughout; the vertical axis is $\lambda_{\beta}$ for MRCE and $\lambda_{\gamma}$ for the two \textsc{SCA-MRCE} methods.}
\label{fig:realdata_validation}
\end{figure}

Figure~\ref{fig:realdata_time} shows the computational difference more clearly. The MRCE panel exhibits substantial inflation in elapsed time as the penalties move into the weakly regularized region, where the fitted precision matrix can become dense, whereas both PG implementations remain below one second per grid point in this representative replicate. A plausible explanation is that MRCE solves a nonconvex alternating problem and that the corresponding \texttt{rblasso}/\texttt{blasso} computations become more expensive when the fitted $\Omega$ is dense. By contrast, the convex PG formulations remain numerically stable across the grid, and their elapsed time varies much less.

\begin{figure}[htbp]
\centering
\includegraphics[width=\textwidth]{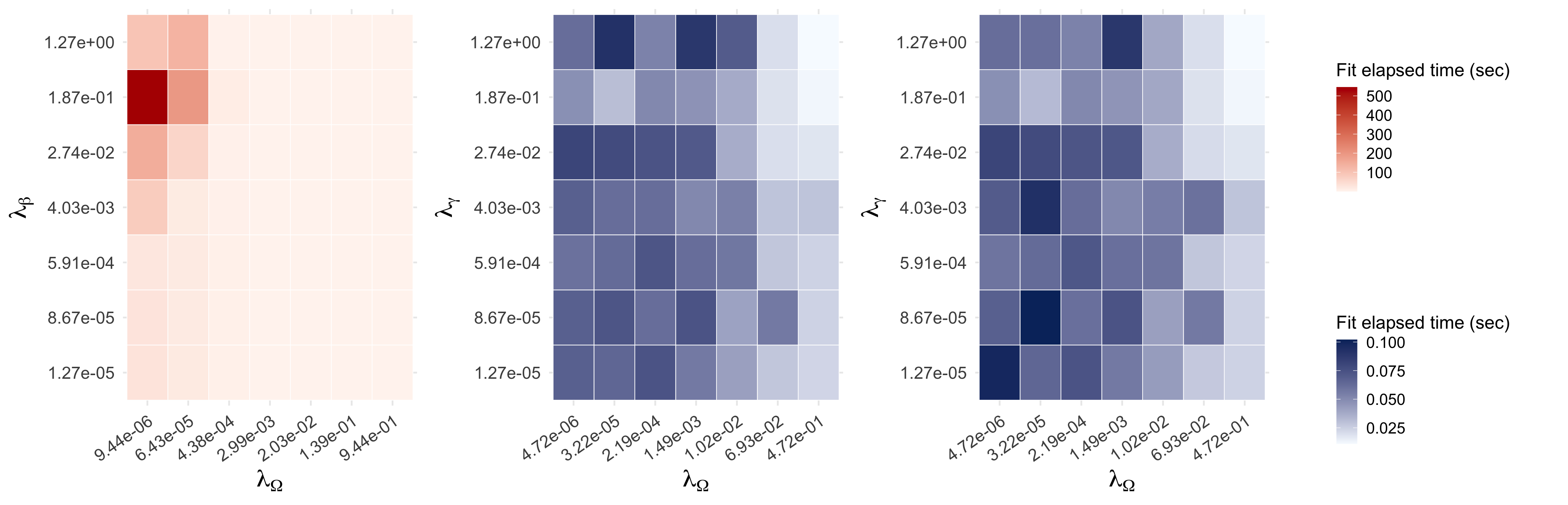}
\caption{Representative elapsed time heatmaps from one replicate. From left to right the panels correspond to MRCE, \texttt{Legacy\_PG}, and \texttt{Aggressive\_PG}. The legend column on the right stacks the MRCE red/white elapsed time legend above the shared PG blue/white legend. The horizontal axis is $\lambda_{\Omega}$ throughout; the vertical axis is $\lambda_{\beta}$ for MRCE and $\lambda_{\gamma}$ for the two \textsc{SCA-MRCE} methods.}
\label{fig:realdata_time}
\end{figure}

The real data analysis is consistent with the simulation study. On this protein expression problem, the convex \textsc{SCA-MRCE} formulation matches classical-parameterization MRCE in ordinary prediction error, improves the conditional prediction metric, and greatly reduces the cost of repeated grid fitting.

\section{Conclusion}
\label{sec:conclusion}

We studied a convex formulation of multivariate linear regression with Gaussian errors based on the reparameterization $(\gamma,\Omega)$ and showed that the corresponding Gaussian loss is standard self-concordant. This result places the penalized likelihood problem within the composite self-concordant framework and justifies both proximal gradient and damped proximal Newton algorithms.

Numerically, proximal Newton uses many fewer outer iterations, but its overall cost depends on the inner solves. For penalty path computation, the proximal gradient variants were more stable and usually faster in elapsed time. The aggressive variants were often faster still, but they do not have the same theoretical convergence guarantees. In the protein expression example, the convex formulation matched classical-parameterization MRCE in ordinary prediction error, improved the conditional prediction metric, and reduced computation time substantially.

\appendix

\section{Proof of Lemma \ref{lem:centering_intercept}}
\label{app:proof_centering}

\begin{proof}
Fix $(\beta^{\mathrm{raw}},\Omega)$ with $\Omega\succ0$, and write
\begin{align*}
r_i(\alpha,\beta^{\mathrm{raw}})
&:=
y_i^{\mathrm{raw}}-\alpha-(\beta^{\mathrm{raw}})^\top x_i^{\mathrm{raw}}, \\
\LG_{\mathrm{raw}}(\alpha,\beta^{\mathrm{raw}},\Omega)
&=
-\log\det(\Omega)
+\frac1n\sum_{i=1}^n r_i(\alpha,\beta^{\mathrm{raw}})^\top \Omega\, r_i(\alpha,\beta^{\mathrm{raw}}).
\end{align*}
Since $\Omega\succ0$, the map $\alpha\mapsto \LG_{\mathrm{raw}}(\alpha,\beta^{\mathrm{raw}},\Omega)$ is strictly convex, and
\[
\nabla_\alpha \LG_{\mathrm{raw}}(\alpha,\beta^{\mathrm{raw}},\Omega)
=
-\frac{2}{n}\sum_{i=1}^n \Omega\bigl(y_i^{\mathrm{raw}}-\alpha-(\beta^{\mathrm{raw}})^\top x_i^{\mathrm{raw}}\bigr)
=
-2\,\Omega\bigl(({\bar y-\bar x\,\beta^{\mathrm{raw}}})^\top-\alpha\bigr).
\]
Therefore the unique minimizer in $\alpha$ is \(\alpha^\star(\beta^{\mathrm{raw}})=(\bar y-\bar x\, \beta^{\mathrm{raw}})^\top.\)

Now set $\beta=D_x\beta^{\mathrm{raw}}$, so that $\beta^{\mathrm{raw}}=D_x^{-1}\beta$. Substituting $\alpha^\star(\beta^{\mathrm{raw}})$ back into the residuals gives
\begin{equation*}
y_i^{\mathrm{raw}}-\alpha^\star(\beta^{\mathrm{raw}})-(\beta^{\mathrm{raw}})^\top x_i^{\mathrm{raw}}
=
\bigl(y_i^{\mathrm{raw}}-\bar y^\top\bigr)-\beta^\top D_x^{-1}\bigl(x_i^{\mathrm{raw}}-\bar x^\top\bigr)
=
y_i-\beta^\top x_i.
\end{equation*}
Hence
\[
\LG_{\mathrm{raw}}\bigl(\alpha^\star(\beta^{\mathrm{raw}}),\beta^{\mathrm{raw}},\Omega\bigr)
=
-\log\det(\Omega)
+\frac1n\sum_{i=1}^n
\bigl(y_i-\beta^\top x_i\bigr)^\top
\Omega
\bigl(y_i-\beta^\top x_i\bigr)
=
\LG_{\mathrm{prep}}(\beta,\Omega).
\]
Any penalty that does not involve $\alpha$ is unchanged by this substitution, so the penalized problem on the raw data with an unpenalized intercept is equivalent to the corresponding preprocessed problem without an explicit intercept.

\end{proof}

\section{Proof of Lemma \ref{lem:self_concordant_Gauss}}
\label{app:proof_self_concordant}

\begin{proof}

Fix $\xi=(\gamma,\Omega)\in\dom(\LG)$ and a direction $d\xi=(d\gamma,d\Omega)\in\Xi$.
Let
\(  
\phi(t):=\LG(\xi+t\,d\xi),
\)
for all $t$ such that $\Omega+t\,d\Omega\succ0$.
By Definition~\ref{def:self_concordant_1d}, it suffices to show
\( 
|\phi'''(0)|\le 2\{\phi''(0)\}^{3/2}.
\)

Write
\( 
\phi(t)= -\log\det(\Omega+t\,d\Omega) + g(t),
\)
where
\[
g(t)
:=
\tr\Bigl(
  (\Omega+t\,d\Omega)^{-1}(\gamma+t\,d\gamma)^\top \SXX(\gamma+t\,d\gamma)
  -2\,\SXY^\top(\gamma+t\,d\gamma)
  +(\Omega+t\,d\Omega)\SYY
\Bigr).
\]
The last two terms in $g(t)$ are affine in $(\gamma,\Omega)$, hence they contribute
neither to $\phi''(0)$ nor to $\phi'''(0)$.

Introduce the normalized quantities
\( 
  \bar d\Omega := \Omega^{-1/2} d\Omega\,\Omega^{-1/2}\in\Symq,\ 
  \bar\gamma := \SXX^{1/2}\gamma\,\Omega^{-1/2},\ 
  \bar d\gamma := \SXX^{1/2}d\gamma\,\Omega^{-1/2},
\)
and define $\Delta:=\bar d\gamma-\bar\gamma\,\bar d\Omega$.
Using
\((\Omega+t\,d\Omega)^{-1}
=
\Omega^{-1/2}(I+t\bar d\Omega)^{-1}\Omega^{-1/2},\) the only part of $g(t)$ that is not affine becomes 
\(\tr\Bigl((I+t\bar d\Omega)^{-1}(\bar\gamma+t\bar d\gamma)^\top(\bar\gamma+t\bar d\gamma)\Bigr).\)
Now set \(A:=\bar d\Omega, B:=\bar\gamma, C:=\bar d\gamma.\)
Since \((I+tA)^{-1}=I-tA+t^2A^2-t^3A^3+o(t^3),\)
a direct multiplication gives
\begin{align*}
\tr\Bigl((I+tA)^{-1}(B+tC)^\top(B+tC)\Bigr)
&=
c_0+t c_1+t^2\tr\bigl((C-BA)^\top(C-BA)\bigr) \\
&\quad
-t^3\tr\bigl((C-BA)^\top(C-BA)A\bigr)+o(t^3).
\end{align*}
for some scalars $c_0,c_1$. Therefore,
\(g''(0)=2\|\Delta\|_F^2,
g'''(0)=-6\,\tr(\Delta^\top\Delta\,\bar d\Omega).\)

For the log determinant term,
\( -\log\det(\Omega+t\,d\Omega)
=
-\log\det(\Omega)-\log\det(I+t\bar d\Omega),\)
and the Taylor expansion of $\log\det(I+t\bar d\Omega)$ yields
\(
\frac{d^2}{dt^2}\Bigl[-\log\det(\Omega+t\,d\Omega)\Bigr]_{t=0}
=
\|\bar d\Omega\|_F^2\), and
\(
\frac{d^3}{dt^3}\Bigl[-\log\det(\Omega+t\,d\Omega)\Bigr]_{t=0}
=
-2\,\tr\bigl((\bar d\Omega)^3\bigr).
\)
Hence
\(
\phi''(0)=
\|\bar d\Omega\|_F^2 + 2\|\Delta\|_F^2,
\phi'''(0)=
-2\,\tr\bigl((\bar d\Omega)^3\bigr)
-6\,\tr\bigl(\Delta^\top\Delta\,\bar d\Omega\bigr).
\)
In particular, $\phi''(0)\ge 0$ for every feasible $\xi$ and $d\xi$. Since the same
calculation applies after shifting the base point along any admissible line, every
one-dimensional restriction of $\LG$ has nonnegative second derivative; therefore
$\LG$ is convex.

Consequently,
\(
|\phi'''(0)|
\le
2\|\bar d\Omega\|_F^3 + 6\|\Delta\|_F^2\,\|\bar d\Omega\|_F,
\)
where we used
\(
|\tr((\bar d\Omega)^3)|\le \|\bar d\Omega\|_F^3
\)
and
\(
|\tr(\Delta^\top\Delta\,\bar d\Omega)|
\le \|\Delta^\top\Delta\|_F\,\|\bar d\Omega\|_F
\le \|\Delta\|_F^2\,\|\bar d\Omega\|_F.
\)

Let
\(
x:=\|\bar d\Omega\|_F,
y:=\|\Delta\|_F.
\)
Then
\(
\phi''(0)=x^2+2y^2,
|\phi'''(0)|\le 2x^3+6xy^2.
\)
If $x=0$ the desired inequality is trivial. Otherwise, with
\(
u:=\frac{y^2}{x^2}\ge 0,
\)
we obtain
\(
\frac{|\phi'''(0)|}{2\{\phi''(0)\}^{3/2}}
\le
\frac{1+3u}{(1+2u)^{3/2}}.
\)
Define
\(
h(u):=(1+2u)^{3/2}-1-3u.
\)
Then $h(0)=0$ and
\(
h'(u)=3\bigl(\sqrt{1+2u}-1\bigr)\ge 0
\ \text{for all }u\ge 0,
\)
hence $h(u)\ge 0$ and therefore
\(
1+3u\le (1+2u)^{3/2}.
\)
This proves
\(
|\phi'''(0)|\le 2\{\phi''(0)\}^{3/2},
\)
i.e., $\LG$ is standard self-concordant.
\end{proof}

\section{Additional algorithmic details for the ADMM solvers}
\label{app:gap_framework}

This appendix collects the implementation material omitted from Section~\ref{sec:algorithms}. The ADMM splitting for the PN subproblem follows \citet{zhu2020convex}; we record the positive-definite $\Omega$-block proximal map and the primal-dual certificates used to stop the PN subproblem.

\subsection{The \texorpdfstring{$\Omega$}{Omega} block proximal operator}
\label{app:omega_prox}

This step follows the standard positive-definite $\ell_1$-penalized precision updates used in ADMM solvers; see, for example, \citet{xue2012positive}. We record it here because it is used repeatedly inside both the PG and PN implementations. We need to compute
\begin{equation}
\label{eq:omega_prox_primal}
\prox_{\kappa\|\cdot\|_{1,\mathrm{off}}+\iota_{\{\Omega\succeq \delta I\}}}(V)
=
\arg\min_{\Omega\succeq \delta I_q}
\Bigl\{\frac12\|\Omega-V\|_F^2+\kappa\|\Omega\|_{1,\mathrm{off}}\Bigr\},
\end{equation}
for symmetric $V\in\Symq$, threshold $\kappa>0$, and safeguard $\delta\ge 0$. Let \(\Omega^{\mathrm{soft}}=\operatorname{SoftThresh}_{\mathrm{off}}(V,\kappa),\) where only off-diagonal entries are soft-thresholded and the diagonal is left unchanged. If $\Omega^{\mathrm{soft}}\succeq \delta I_q$, then $\Omega^{\mathrm{soft}}$ is already the constrained proximal point, so no inner ADMM iterations are needed. In particular, at such updates the legacy and aggressive PG implementations are identical.

Otherwise, we enforce the positive-definite safeguard via ADMM with an auxiliary variable $K$:
\[
\min_{\Omega \in\Symq,K\succeq\delta I,\Omega=K}\;
\frac12\|\Omega-V\|_F^2+\kappa\|\Omega\|_{1,\mathrm{off}}.
\]
This problem can be solved exactly by Algorithm 1 in \citet{xue2012positive}.


\subsection{PN subproblem primal-dual gap}
\label{app:pn_gap}
We assume $H^{(m)}$ is positive definite. For the PN subproblem at outer iterate $m$, define the primal objective
\begin{equation}
\label{eq:pn_primal_objective}
Q_m(\zeta)
:=
\frac12 \tr\bigl(\zeta^\top H^{(m)}\zeta\,\Sigma^{(m)}\bigr)
+ \tr\bigl(\zeta^\top G^{(m)}\bigr)
+ \lambda_\gamma\|\gamma_\zeta\|_1
+ \lambda_\Omega\|\Omega_\zeta\|_{1,\mathrm{off}}
+ \iota_{\{\Omega_\zeta\succeq \delta I\}}.
\end{equation}
Consider any dual feasible certificate $(W_\gamma,Y_\Omega,S_\Omega)$ satisfying
\begin{equation}
\label{eq:pn_dual_feasible}
\|W_\gamma\|_{\max}\le \lambda_\gamma,
\qquad
Y_{\Omega,ii}=0,\ \ |Y_{\Omega,ij}|\le \lambda_\Omega\ (i\neq j),
\qquad
S_\Omega\preceq 0.
\end{equation}
Set $W_\Omega:=Y_\Omega+S_\Omega$ and $W:=(W_\gamma,W_\Omega)$. Fenchel duality gives the dual objective
\begin{equation}
\label{eq:pn_dual_obj}
D_m(W)
:=
-\frac12\tr\!\Bigl(
(W+G^{(m)})^\top\{H^{(m)}\}^{-1}(W+G^{(m)})\,\Omega^{(m)}
\Bigr)
-\delta\,\tr(S_\Omega).
\end{equation}
If $H^{(m)}$ is not positive definite, one could replace the inverse $\{H^{(m)}\}^{-1}$ by the Moore--Penrose inverse $\{H^{(m)}\}^{\dagger}$ in \eqref{eq:pn_dual_obj}.

Hence the primal-dual gap is
\begin{equation}
\label{eq:pn_gap}
\operatorname{gap}_{\mathrm{PN},m}(\zeta,W)
:=
Q_m(\zeta)-D_m(W)\ge 0,
\end{equation}
with relative version
\begin{equation}
\label{eq:pn_gap_rel}
\operatorname{gap}_{\mathrm{PN},m}^{\mathrm{rel}}(\zeta,W)
:=
\frac{\operatorname{gap}_{\mathrm{PN},m}(\zeta,W)}{1+|Q_m(\zeta)|}.
\end{equation}
In the ADMM implementation, the outer multiplier directly provides $W_\gamma=\rho U_\gamma$. For the $\Omega$ block, a box feasible matrix $Y_\Omega$ can be formed from the clipped multiplier in the nested routine of Appendix~\ref{app:omega_prox} and then combined with a semidefinite slack term $S_\Omega$ to form $W_\Omega=Y_\Omega+S_\Omega$. This yields an exact computable gap certificate for stopping and logging.

\section{Penalty path initialization at the null model}
\label{app:proof_null_corner}

\begin{proof}
By the KKT conditions, a point $(\gamma^\star,\Omega^\star)$ with $\Omega^\star\succ0$ minimizes \(F(\gamma,\Omega)=\LG(\gamma,\Omega)+\lambda_\gamma\|\gamma\|_1+\lambda_\Omega\|\Omega\|_{1,\mathrm{off}}\)
if and only if the zero matrix belongs to its subdifferential. 

We verify the KKT conditions at \(\gamma^\star=\mathbf 0,\ \Omega^\star=\diag(\SYY)^{-1}.\)
First, because $\gamma^\star=\mathbf 0$, \(\nabla_\gamma \LG(\gamma^\star,\Omega^\star)=-2\SXY.\) The subgradient condition for the entrywise $\ell_1$ penalty is therefore
\( 
\bigl\|\nabla_\gamma \LG(\gamma^\star,\Omega^\star)\bigr\|_{\max}
=
2\|\SXY\|_{\max}
\le \lambda_\gamma,
\)
which holds whenever $\lambda_\gamma\ge 2\|\SXY\|_{\max}=\lambda_\gamma^{\max}$.

Second, since $\gamma^\star=\mathbf 0$, \(\nabla_\Omega \LG(\gamma^\star,\Omega^\star)=\SYY-(\Omega^\star)^{-1}=\SYY-\diag(\SYY).\) The diagonal stationarity condition is satisfied exactly because $(\Omega^\star)^{-1}$ matches $\diag(\SYY)$, while the off-diagonal subgradient condition becomes \( \|\SYY\|_{\max,\mathrm{off}}\le \lambda_\Omega,\) which holds whenever $\lambda_\Omega\ge \|\SYY\|_{\max,\mathrm{off}}=\lambda_\Omega^{\max}$.
All KKT conditions are therefore satisfied. 
\end{proof}

\bibliographystyle{plainnat}
\bibliography{references_revised_clean_final}

\end{document}